\documentclass[twocolumn,english,prl,reprint,nofootinbib,superscriptaddress,longbibliography]{revtex4-2}
\usepackage[T1]{fontenc}
\usepackage[latin9]{inputenc}
\usepackage{geometry}
\geometry{verbose,tmargin=2cm,bmargin=2cm,lmargin=2.5cm,rmargin=2.5cm}
\setcounter{secnumdepth}{3}
\usepackage{color}
\usepackage{babel}
\usepackage{amsmath}
\usepackage{amsthm}
\usepackage{amssymb}
\usepackage{graphicx}
\usepackage[unicode=true,pdfusetitle,
 bookmarks=true,bookmarksnumbered=false,bookmarksopen=false,
 breaklinks=false,pdfborder={0 0 1},backref=false,colorlinks=true]
 {hyperref}
\hypersetup{
 pdfborderstyle=,allcolors=magenta}

\makeatletter
\theoremstyle{plain}
\newtheorem{thm}{\protect\theoremname}
\theoremstyle{plain}
\newtheorem{prop}[thm]{\protect\propositionname}
\theoremstyle{plain}
\newtheorem{lem}{\protect\lemmaname}

\usepackage{babel}
\usepackage{txfonts}
\usepackage{braket}

\providecommand{\propositionname}{Proposition}
\providecommand{\theoremname}{Theorem}
\providecommand{\lemmaname}{Lemma}

\makeatother

\providecommand{\propositionname}{Proposition}
\providecommand{\theoremname}{Theorem}

\begin{document}
\title{Quantum speed limits for change of basis}
\author{Moein Naseri}
\affiliation{Centre for Quantum Optical Technologies IRAU, Centre of New Technologies,
University of Warsaw, Poland}
\author{Chiara Macchiavello}
\affiliation{Dipartimento di Fisica, Universit\`a di Pavia, via Bassi 6, I-27100 Pavia, Italy}
\affiliation{INFN Sezione di Pavia, via Bassi 6, I-27100, Pavia, Italy}
\affiliation{CNR-INO, largo E. Fermi 6, I-50125, Firenze, Italy}
\author{Dagmar Bru\ss}
\affiliation{Institut f\"ur Theoretische Physik III, Heinrich-Heine-Universit\"at D\"usseldorf,~~\\
 D-40225 D\"usseldorf, Germany}
\author{Pawe\l{} Horodecki}
\affiliation{International Centre for Theory of Quantum Technologies,
University of Gda\'nsk, Wita Stwosza 63, 80-308 Gda\'nsk, Poland}
\affiliation{Faculty of Applied Physics and Mathematics, National Quantum Information Centre,
Gda\'nsk University of Technology, Gabriela Narutowicza 11/12, 80-233 Gda\'nsk, Poland}

\author{Alexander Streltsov}
\email{a.streltsov@cent.uw.edu.pl}
\affiliation{Centre for Quantum Optical Technologies IRAU, Centre of New Technologies,
University of Warsaw, Poland}

\begin{abstract}
Quantum speed limits provide ultimate bounds on the time required to transform one quantum state into another. Here, we extend the notion of quantum speed limits to collections of quantum states, investigating the time for converting a basis of states into an unbiased one. We provide tight bounds for systems of dimension smaller than 5, and general bounds for multi-qubit systems and Hilbert space dimension $d$. For two-qubit systems, we show that the fastest transformation implements two Hadamards and a swap of the qubits simultaneously. We further prove that for qutrit systems the evolution time depends on the particular type of the unbiased basis. We also investigate speed limits for coherence generation, providing the minimal time to establish a certain amount of coherence with a unitary evolution.
\end{abstract}

\maketitle

\textbf{\emph{Introduction.}} Striving for quantum advantages, such
as an increased speed of a computation, has become a competitive goal.
However, nature has established a fundamental speed limit, via a minimal
time that is necessary for the unitary evolution of an initial quantum
state to a final quantum state, as pointed out in~\cite{Mandelstam1945,Margolus1998188}.
In a geometric approach~\cite{JonesPhysRevA.82.022107,ZwierzPhysRevA.86.016101,PiresPhysRevX.6.021031,CampaioliPhysRevLett.120.060409},
the quantum speed limit is linked to the length of the shortest path
between initial and final state, which can be quantified via a suitable
distance measure. For a recent review of quantum speed limits, see~\cite{Deffner_2017}. 

The standard approach to quantum speed limits assumes that a quantum
state $\ket{\psi}$ is transformed into another state $\ket{\phi}$
via a unitary evolution $U=e^{-iHt}$. The task is to determine the
optimal evolution time for the transition $\ket{\psi}\rightarrow\ket{\phi}$,
with respect to the energy scale of the Hamiltonian $H$. First results
in this direction were presented for orthogonal states, and are known
as Mandelstam-Tamm bound~\cite{Mandelstam1945}: 
\begin{equation}
T_{\perp}\geq\frac{\pi}{2\Delta E_{\psi}}\;,\label{eq:Mandelstam}
\end{equation}
where $(\Delta E_{\psi})^{2}=\braket{H^{2}}_{\psi}-\braket{H}_{\psi}^{2}$
is the energy variance. Another bound was derived later by Margolus
and Levitin~\cite{Margolus1998188}, giving 
\begin{equation}
T_{\perp}\geq\frac{\pi}{2E_{\psi}}\;,\label{eq:Margolus}
\end{equation}
with the mean energy $E_{\psi}=\braket{H}_{\psi}-E_{0}$, and $E_{0}$
is the ground state energy. Note that the speed limits~(\ref{eq:Mandelstam})
and~(\ref{eq:Margolus}) differ only by the different choice of the
energy scale. For transition between mixed states $\rho\rightarrow\sigma$
generalized quantum speed limits have been presented~\cite{LevitinPhysRevLett.103.160502,PiresPhysRevX.6.021031,CampaioliPhysRevLett.120.060409,ShanahanPhysRevLett.120.070401}:
\begin{equation}
T(\rho\rightarrow\sigma)\geq\frac{\arccos F(\rho,\sigma)}{\min\left\{ \Delta E_{\rho},E_{\rho}\right\} }\label{eq:MandelstamLevitinMixed}
\end{equation}
with fidelity $F(\rho,\sigma)=\mathrm{Tr}\sqrt{\sqrt{\rho}\sigma\sqrt{\rho}}$. 

While the original approaches~\cite{Mandelstam1945,Margolus1998188} studied the speed limit for unitary transitions between two quantum states,  more general versions of the speed limit have been developed in the last years. This includes investigation of quantum speed limits for open system dynamics~\cite{delCampoPhysRevLett.110.050403,Funo_2019,Teittinene23030331,Teittinen_2019}, as well as speed limits for the evolution of observables in the Heisenberg picture~\cite{MohanArxiv.2112.13789}, and the study of speed limit for a bounded energy spectrum~\cite{PhysRevLett.129.140403}. A theoretical approach for measuring quantum speed limits in an ultracold gas has been proposed recently in~\cite{delCampoPhysRevLett.126.180603}. Speed limits for generating quantum resources have also been considered~\cite{Campaioli_2022}, allowing to determine optimal rates for generating quantum entanglement~\cite{HorodeckiRevModPhys.81.865}, quantum coherence~\cite{StreltsovRevModPhys.89.041003}, and quantum discord~\cite{ModiRevModPhys.84.1655,Streltsov_2015}.

Note that the early approaches~\cite{Mandelstam1945,Margolus1998188} studied the speed limit for transforming
\emph{one} quantum state into another one. However, many quantum technological
applications require to transform a collection of states. An important
example is quantum computation where a common operation is a change
of basis, e.g. by applying the well-known Hadamard gate which transforms
the computational qubit basis $\{\ket{0},\ket{1}\}$ into $\{\ket{+},\ket{-}\}$,
with $\ket{\pm}=(\ket{0}\pm\ket{1})/\sqrt{2}$. 

\begin{figure} 
\includegraphics[width=1\columnwidth]{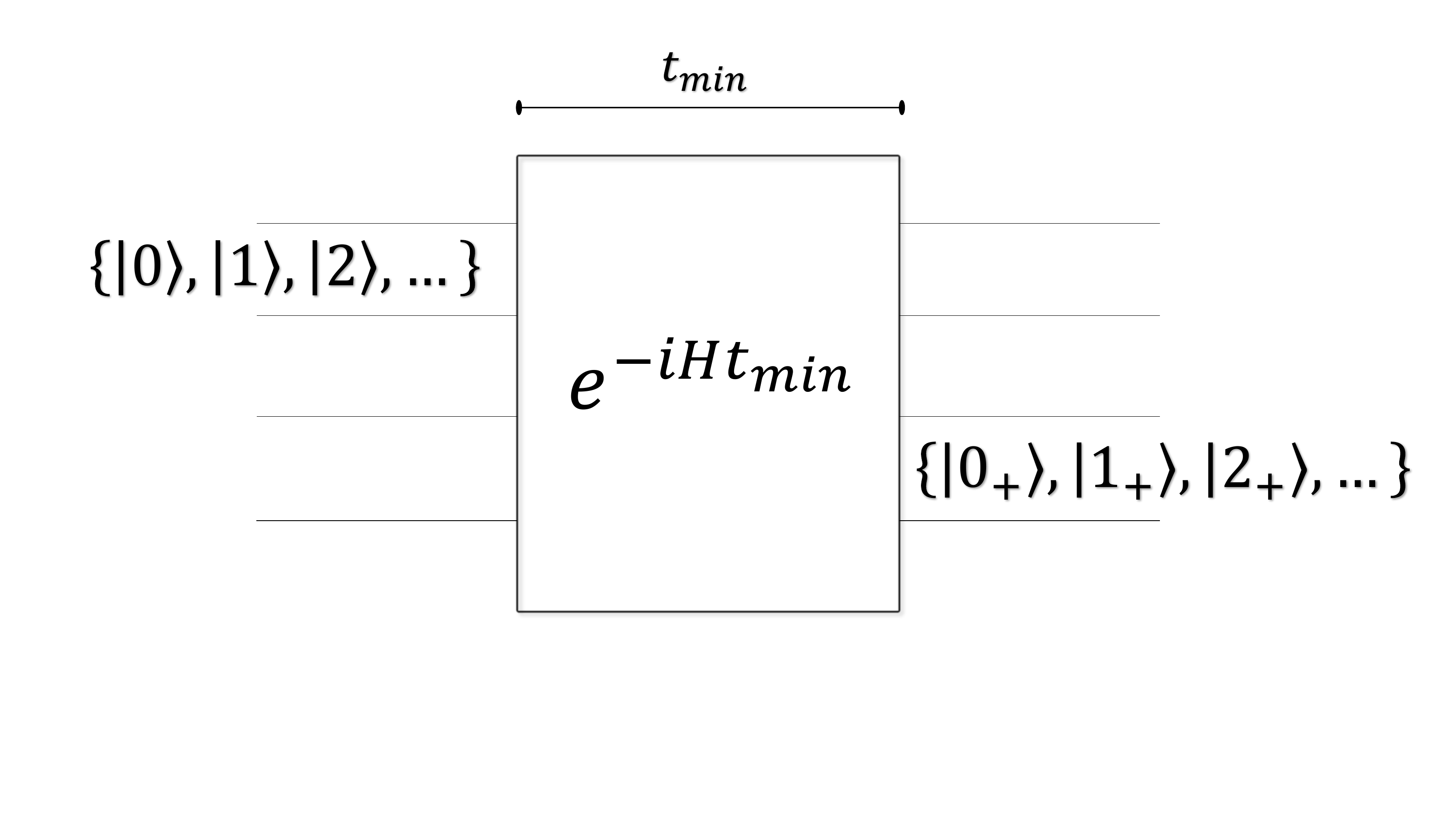}

\caption{\label{fig:SpeedLimitUnbiased} Generation of an unbiased basis $\{\ket{n_+}\}$ from the computational basis $\{\ket{n}\}$ via a unitary evolution $e^{-iHt_{\min}}$.}
\end{figure}

Which fundamental speed limits hold for such a basis transformation?
We address this question in this Letter, investigating bounds
on the time that is necessary to perform a basis change, i.e. a transformation
of an ordered set of quantum states to another ordered set of quantum
states, minimized over all Hamiltonians. In the spirit of the Margolus-Levitin
bound~(\ref{eq:Margolus}), we aim for quantum speed limits of the
form 
\begin{equation}
T(\ket{\psi_{j}}\rightarrow\ket{\phi_{j}})\geq\frac{g}{E}.\label{eq:GeneralSpeedLimit}
\end{equation}
Here $\{\ket{\psi_j}\},\{\ket{\phi_j}\}$
are two ordered sets of orthonormal basis states, with $j=1,...,d$,
where $d$ is the dimension of the Hilbert space,
and $g$ can in general depend on the sets $\{\ket{\psi_j}\}$ and $\{\ket{\phi_j}\}$. The quantity $E$ in Eq.~(\ref{eq:GeneralSpeedLimit})
denotes the mean energy of the Hamiltonian, which we define as 
\begin{equation}
E=\frac{1}{d}\sum_{j}\braket{\psi_{j}|H|\psi_{j}} - E_{0},\label{eq:MeanEnergy}
\end{equation}
naturally generalizing the mean energy $E_{\psi}$ appearing in the
Margolus-Levitin bound~(\ref{eq:Margolus}). The mean energy~(\ref{eq:MeanEnergy})
is equivalent to $E=\mathrm{Tr}[H/d] - E_{0}$, and thus independent
on the particular choice of basis $\{\ket{\psi_{j}}\}$. We also note that the mean energy is additive for non-interactive Hamiltonians of the form $H^{AB} = H^A \otimes \openone^B + \openone^A \otimes H^B$: 
\begin{equation}
    E_{AB} = E_A + E_B,
\end{equation}
where $E_A$ and $E_B$ are the mean energies of $H^A$ and $H^B$, respectively.

In addition to investigating speed limits for the change of basis, we also study speed limits for coherence generation. In particular, we consider the maximal coherence which can be established within a certain time, given some Hamiltonian with mean energy $E$. These results are highly relevant in the context of the resource theory of quantum coherence~\cite{BaumgratzPhysRevLett.113.140401,WinterPhysRevLett.116.120404,StreltsovRevModPhys.89.041003}, taking into account that several recent works suggest that quantum coherence is more suitable than entanglement to capture the performance of certain quantum algorithms~\cite{Matera_2016,AhnefeldPhysRevLett.129.120501,Naseri2022}. 

\medskip{}

\textbf{\emph{Speed limits for unbiased bases.}} In the following,
we will determine speed limits for basis change from the computational
basis $\{\ket{n}\}$ into an unbiased basis $\{\ket{n_{+}}\}$ with
$|\!\braket{n|n_{+}}\!|^{2}=1/d$, see also Fig.~\ref{fig:SpeedLimitUnbiased}. For single-qubit systems we obtain the bound
\begin{align}
T_{\mathrm{unbiased}} & \geq\frac{\pi}{4E}, \label{eq:UnbiasedBasisQubit}
\end{align}
which is tight for any unbiased qubit basis. See Appendix~\ref{sec:SingleQubit}
for more details on speed limits for single-qubit transitions.

It is now intuitive to assume that for $d > 2$ the evolution time into an unbiased basis increases, compared to the qubit setting. To support this intuition, consider a two-qubit system $AB$, and let $H^A$ and $H^B$ be qubit Hamiltonians which bring $\{\ket{0},\ket{1}\}$ into $\{\ket{+},\ket{-}\}$ within minimal time $\pi/(4E_A)$ and $\pi/(4E_B)$, respectively. If we set $E_A = E_B$, the Hamiltonian $H^{AB} = H^A \otimes \openone^B + \openone^A \otimes H^B$ achieves the transformation
\begin{equation}
\left\{ \ket{00},\ket{01},\ket{10},\ket{11}\right\} \,\,\,\,\rightarrow\,\,\,\,\{\ket{++},\ket{+-},\ket{-+},\ket{--}\}
\end{equation}
within time $\pi/(4E_A) = \pi/(2E)$, where $E=2E_A$ is the mean energy of the total Hamiltonian $H^{AB}$. From this argument, we see that for $d=4$ an unbiased basis can be achieved within time $\pi/(2E)$, which is longer compared to the single-qubit setup.

As we will see in the following, this intuition is not correct. For this, we will first focus on qutrit systems. As we show in
Appendix~\ref{sec:Unbiased-Qutrits}, a general unbiased qutrit basis
can be obtained via a diagonal unitary 
\begin{equation}
V=\sum_{j}e^{i\alpha_{j}}\ket{j}\!\bra{j}
\end{equation}
from one of the following two bases (denoted by $\{\ket{n_{+}}\}$
and $\{\ket{\tilde{n}_{+}}\}$, respectively): \begin{subequations}\label{eq:Unbiased-1}
\begin{align}
\ket{0_{+}} & =\frac{1}{\sqrt{3}}\left(\ket{0}+e^{i\frac{2}{3}\pi}\ket{1}+e^{i\frac{4}{3}\pi}\ket{2}\right),\\
\ket{1_{+}} & =\frac{1}{\sqrt{3}}\left(\ket{0}+\ket{1}+\ket{2}\right),\\
\ket{2_{+}} & =\frac{1}{\sqrt{3}}\left(\ket{0}+e^{-i\frac{2}{3}\pi}\ket{1}+e^{-i\frac{4}{3}\pi}\ket{2}\right),
\end{align}
\end{subequations} and \begin{subequations}\label{eq:Unbiased-2}
\begin{align}
\ket{\tilde{0}_{+}} & =\frac{1}{\sqrt{3}}\left(\ket{0}+e^{-i\frac{2}{3}\pi}\ket{1}+e^{-i\frac{4}{3}\pi}\ket{2}\right),\\
\ket{\tilde{1}_{+}} & =\frac{1}{\sqrt{3}}\left(\ket{0}+\ket{1}+\ket{2}\right),\\
\ket{\tilde{2}_{+}} & =\frac{1}{\sqrt{3}}\left(\ket{0}+e^{i\frac{2}{3}\pi}\ket{1}+e^{i\frac{4}{3}\pi}\ket{2}\right).
\end{align}
\end{subequations} 
Note that these two sets of basis states are odd permutations of each other. As discussed in Appendix~\ref{sec:ProofV}, this
implies that speed limits for the transitions $\{\ket{n}\}\rightarrow\{\ket{n_{+}}\}$
and $\{\ket{n}\}\rightarrow\{\ket{\tilde{n}_{+}}\}$ will also lead
to speed limits for general unbiased qutrit bases $\{\ket{n}\}\rightarrow\{V\ket{n_{+}}\}$
and $\{\ket{n}\}\rightarrow\{V\ket{\tilde{n}_{+}}\}$ with a diagonal
unitary $V$. Equipped with these tools, we now present the first
main result of this Letter. 
\begin{thm}
\label{thm:UnbiasedQutrit}The time for converting a qutrit basis
onto an unbiased basis is bounded below as 
\begin{equation}
T_{\mathrm{unbiased}}\geq\frac{2\pi}{9E}.\label{eq:UnbiasedQutritBound}
\end{equation}
\end{thm}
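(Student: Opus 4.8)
\emph{Proof strategy.} The plan is to distill from unbiasedness a single scalar constraint on the spectrum of the evolving unitary, and then solve an elementary trigonometric optimization. Write $U=e^{-iHt}$ and, after subtracting the ground-state energy, let $0=\lambda_{0}\le\lambda_{1}\le\lambda_{2}$ be the eigenvalues of $H$, so that $E=(\lambda_{1}+\lambda_{2})/3$ and, with $\mu_{k}:=\lambda_{k}t\ge0$, the quantity to be bounded is $Et=(\mu_{1}+\mu_{2})/3$. The key point is that unbiasedness already pins down the diagonal of $U$: since $U\ket{n}=e^{i\theta_{n}}\ket{n_{+}}$ we have $\braket{n|U|n}=e^{i\theta_{n}}\braket{n|n_{+}}$, hence $|\braket{n|U|n}|=1/\sqrt{3}$ for every $n$, and the triangle inequality yields
\[
|\Tr U|\le\sum_{n}|\braket{n|U|n}|=\sqrt{3}.
\]
Crucially this uses nothing about the phases $\theta_{n}$, nor about the diagonal unitary $V$ relating a general unbiased basis to the two canonical ones; in particular the reduction flagged above is not needed for this route. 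Writing $\Tr U=1+e^{-i\mu_{1}}+e^{-i\mu_{2}}$ and squaring $|\Tr U|^{2}\le3$ gives
\[
\cos\mu_{1}+\cos\mu_{2}+\cos(\mu_{1}-\mu_{2})\le0 .
\]

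It then remains to minimize $\mu_{1}+\mu_{2}$ over $\mu_{1},\mu_{2}\ge0$ subject to this inequality. Setting $s=(\mu_{1}+\mu_{2})/2$ and $u=(\mu_{1}-\mu_{2})/2$ (so that $|u|\le s$), the constraint becomes $2\cos u\,(\cos s+\cos u)\le1$. If one had $\mu_{1}+\mu_{2}<2\pi/3$, then $s<\pi/3$ and $|u|\le s<\pi/3$, hence $\cos u\ge\cos s>1/2$ and therefore $2\cos u\,(\cos s+\cos u)\ge4\cos^{2}s>1$, a contradiction. Thus $\mu_{1}+\mu_{2}\ge2\pi/3$, i.e.\ $Et\ge2\pi/9$, which establishes~(\ref{eq:UnbiasedQutritBound}). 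Both inequalities are saturated at $\mu_{1}=0$, $\mu_{2}=2\pi/3$, i.e.\ for $U$ with eigenvalues $\{1,1,e^{-2\pi i/3}\}$, so $2\pi/(9E)$ is the best bound obtainable from the spectrum alone.

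The genuinely non-routine step is the first reduction: recognizing that the lack of control over the overlap phases $\theta_{n}$ (and over $V$) is immaterial, because $|\Tr U|\le\sqrt{d}$ already suffices, and that through $|\Tr e^{-iHt}|^{2}$ it converts into a condition on the spectral gaps of $H$ alone. The rest is bookkeeping --- keeping the $E_{0}=0$ shift and the $1/d$ factor consistent --- together with the short estimate $2\cos u\,(\cos s+\cos u)\le1\Rightarrow s\ge\pi/3$, which I expect to be the one place that needs a moment's care.
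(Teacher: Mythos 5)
Your proof is correct and takes essentially the same route as the paper's Appendix~D: unbiasedness forces $|\!\braket{n|U|n}\!|=1/\sqrt{3}$, hence a trace bound of order $\sqrt{d}$ on $U=e^{-iHt}$, which is then converted into a lower bound on the eigenphase gaps and thus on $Et$. The only divergence is the finishing step --- the paper bounds $\Rea\Tr U=\sum_i\cos(E_it)$ by $\sqrt{3}$ and argues by cases after a without-loss-of-generality energy shift, while you square $|\Tr U|\le\sqrt{3}$ to obtain the shift-invariant gap constraint $\cos\mu_1+\cos\mu_2+\cos(\mu_1-\mu_2)\le 0$ and close with a short two-variable estimate; the two endgames are equivalent in substance.
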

\noindent We refer to Appendix~\ref{sec:Proof-Theorem1} for the
proof.

Having established a speed limit for basis change it is natural to
ask whether this bound is tight, i.e., whether for any unbiased basis
there exists a Hamiltonian $H$ with mean energy $E$ saturating the
bound~(\ref{eq:UnbiasedQutritBound}). Recalling the definition of
the unbiased bases $\{\ket{n_{+}}\}$ and $\{\ket{\tilde{n}_{+}}\}$
in Eqs.~(\ref{eq:Unbiased-1}) and~(\ref{eq:Unbiased-2}), we answer
this question in the following proposition. 
\begin{prop} \label{prop:tight}
The speed limit~(\ref{eq:UnbiasedQutritBound}) is tight for the
basis $\{\ket{n_{+}}\}$, but not tight for basis $\{\ket{\tilde{n}_{+}}\}$. 
\end{prop}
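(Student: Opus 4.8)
The plan is to handle the two claims separately, each time reducing to a statement about a $3\times3$ target unitary. To reach the unbiased basis $\{\ket{n_+}\}$ as a set of rays, with the prescribed index matching $\ket n\mapsto\ket{n_+}$, one needs $e^{-iHt}\ket n\propto\ket{n_+}$, i.e.\ $e^{-iHt}=UD$ with $U=\sum_n\ket{n_+}\!\bra n$ and $D$ a phase matrix diagonal in the computational basis; analogously $e^{-iHt}=\tilde UD$ for $\{\ket{\tilde n_+}\}$ with $\tilde U=\sum_n\ket{\tilde n_+}\!\bra n$. A one-line check gives $\ket{\tilde n_+}=\overline{\ket{n_+}}$ entrywise, hence $\tilde U=\overline U$, which is the structural fact driving the difference between the two cases.

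For the tight part I would pick $D=\ket0\!\bra0+\ket1\!\bra1+e^{-i2\pi/3}\ket2\!\bra2$, so that $W:=UD$ is unitary with $\Tr W=\sqrt3$ (real) and $\det W=\det U\cdot\det D=e^{-i\pi/6}$. Since $W$ is unitary its sum of principal $2\times2$ minors equals $\overline{\Tr W}\,\det W=\sqrt3\,e^{-i\pi/6}$, so the characteristic polynomial factors as $(\lambda-e^{i\pi/6})^2(\lambda-e^{-i\pi/2})$; thus $W=e^{i\pi/6}Q+e^{-i\pi/2}(\openone-Q)$ with $Q$ the rank-two projector onto the $e^{i\pi/6}$-eigenspace. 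Taking $H=\tfrac1t\big[-\tfrac\pi6 Q+\tfrac\pi2(\openone-Q)\big]$ yields $e^{-iHt}=W$ (so the basis change is performed in time $t$), a doubly degenerate ground energy $E_0=-\pi/(6t)$, and $E=\tfrac13\Tr H-E_0=\tfrac\pi{18t}+\tfrac\pi{6t}=\tfrac{2\pi}{9t}$, i.e.\ $t=2\pi/(9E)$, saturating~\eqref{eq:UnbiasedQutritBound}.

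For the non-tight part I would invoke the equality conditions behind Theorem~\ref{thm:UnbiasedQutrit}: saturation forces the spectrum of $Ht$ to be $\{\mu_0,\mu_0,\mu_0+2\pi/3\}$ (a doubly degenerate ground level) and, at the same time, all diagonal elements $\braket{n|e^{-iHt}|n}$ to coincide. The diagonal entries of $\tilde U$ have modulus $1/\sqrt3$ and arguments $0,0,-2\pi/3$, so the coincidence condition forces $D=e^{i\theta}\big(\ket0\!\bra0+\ket1\!\bra1+e^{i2\pi/3}\ket2\!\bra2\big)$; for this $D$ one has $\tilde UD=e^{i\theta}\,\overline W$ with $W$ as above, hence eigenvalues $e^{i\theta}\{e^{-i\pi/6},e^{-i\pi/6},e^{i\pi/2}\}$ and admissible $Ht$-eigenvalues in $\{-\theta+\pi/6\}+2\pi\mathbb Z$ (twice) and $\{-\theta-\pi/2\}+2\pi\mathbb Z$ (once). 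No choice of integer lifts can arrange these three numbers into the pattern $\{\mu_0,\mu_0,\mu_0+2\pi/3\}$, since that would require either $4\pi/3$ or $2\pi/3$ to be an integer multiple of $2\pi$. This contradicts the equality condition, so~\eqref{eq:UnbiasedQutritBound} is strict for $\{\ket{\tilde n_+}\}$; for concreteness, $H=\tfrac1t\big[-\tfrac\pi2 P+\tfrac\pi6(\openone-P)\big]$, with $P$ the rank-one $e^{i\pi/2}$-projector of $\tilde UD$, realizes the transformation in time $t=4\pi/(9E)>2\pi/(9E)$.

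The main obstacle is extracting the precise equality conditions of Theorem~\ref{thm:UnbiasedQutrit} — that saturating the bound pins down both the degenerate-ground spectral pattern of $Ht$ and the coincidence of the diagonal entries of $e^{-iHt}$. Granting that, the remaining steps are elementary $3\times3$ linear algebra and arithmetic modulo $2\pi$.
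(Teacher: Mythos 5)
Your proposal is correct and takes essentially the same route as the paper: for $\{\ket{n_+}\}$ the paper exhibits the rank-one Hamiltonian $H=\ket{\alpha}\!\bra{\alpha}$, whose unitary equals your $W$ up to a global phase, and for $\{\ket{\tilde{n}_+}\}$ it derives the same saturation conditions ($E_1=E_0$ and $(E_2-E_0)t=2\pi/3$) and pins down the phases through saturation of the trace bound $\Tr[U+U^\dagger]=2\sqrt{3}$ -- which is exactly your diagonal-coincidence condition -- before reaching the same eigenvalue contradiction. The equality conditions you flag as the main obstacle are immediate from the proof of Theorem~\ref{thm:UnbiasedQutrit}: the chain $Et\ge\tfrac{1}{3}(E_2-E_0)t\ge\tfrac{2\pi}{9}$ forces $E_1=E_0$ and $(E_2-E_0)t=\tfrac{2\pi}{3}$, hence $|\Tr e^{-iHt}|=|2+e^{-i2\pi/3}|=\sqrt{3}$, and since every diagonal entry of $e^{-iHt}$ has modulus $1/\sqrt{3}$, the triangle inequality is saturated and the diagonal entries must coincide.
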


\noindent We refer to Appendix~\ref{sec:Proof-Proposition2} for
the proof.

The above results imply that there are two different classes of unbiased
bases for qutrits: bases of the form $\{V\ket{n_{+}}\}$ can be obtained
from the computational basis at time $T=2\pi/9E$, while bases of
the form $\{V\ket{\tilde{n}_{+}}\}$ require an evolution time $T>2\pi/9E$,
where $V$ is an arbitrary diagonal unitary. For the second class
$\{V\ket{\tilde{n}_{+}}\}$ we have numerical evidence that a tight
speed limit is given as 
\begin{equation}
T\left(\ket{n}\rightarrow\ket{\tilde{n}_{+}}\right)\geq\frac{4\pi}{9E}.\label{eq:NumericalQutrit}
\end{equation}
To see this, note that any unitary achieving the transformation $\ket{n}\rightarrow\ket{\tilde{n}_{+}}$
must be of the form 
\begin{equation}
U=\sum_{n=0}^{2}e^{i\phi_{n}}\ket{\tilde{n}_{+}}\!\bra{n}\label{eq:Uqutrit-1}
\end{equation}
with some phases $\phi_{n}$ (see also Appendix~\ref{sec:Proof-Theorem1}).
Let now $\lambda_{j}=e^{-i\alpha_{j}}$ be the eigenvalues of $U$,
such that the phases $\alpha_{j}$ are in increasing order and $-\pi\leq\alpha_{j}\leq\pi$. For a given set of such phases $\{\alpha_j\}$, there exists a Hamiltonian implementing the unitary $U = e^{-iHt}$ such that 
\begin{equation}
    E_j t = \alpha_j \,\,\,\, \mathrm{or} \,\,\,\, E_j t = \alpha_j +2 \pi, \label{eq:NumericalTest-2}
\end{equation}
where $E_j$ are the eigenvalues of $H$. The mean energy of the numerically obtained Hamiltonian then fulfills 
\begin{equation}
    Et = \frac{1}{3} \sum_j E_j t - E_0t. \label{eq:NumericalTest}
\end{equation}
Using these results, we can test Eq.~(\ref{eq:NumericalQutrit}), by
numerically sampling random phases $0\leq\phi_{n}\leq2\pi$ and evaluating
$Et$ via Eq.~(\ref{eq:NumericalTest}). The choice of $E_jt$ as in Eq.~(\ref{eq:NumericalTest-2}) guarantees that the numerical Hamiltonians obtained in this way contain Hamiltonians with the minimal value of $Et$.

\begin{figure}
\includegraphics[width=1\columnwidth]{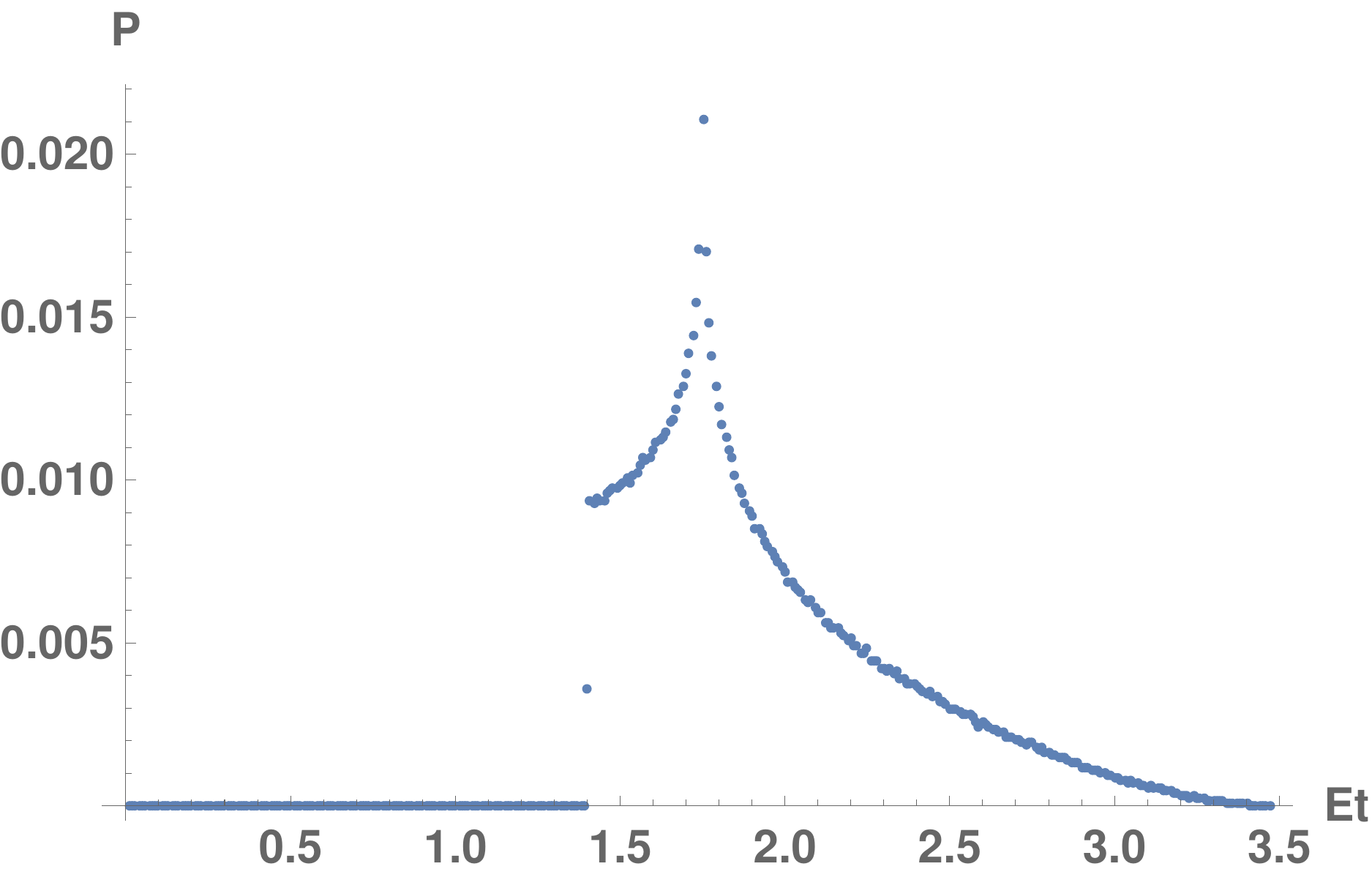}

\caption{\label{fig:QutritNumerics} Numerical test of Eq.~(\ref{eq:NumericalQutrit}).
We sample $10^{6}$ unitaries of the form~(\ref{eq:Uqutrit-1}) with
random phases $0\protect\leq\phi_{n}\protect\leq2\pi$ and evaluate
$Et$ using Eq.~(\ref{eq:NumericalTest}). The plot shows the numerical
probability as a function of $Et$. As a numerical bound, we obtain $Et \geq \frac{4}{9}\pi + \varepsilon$ with $\varepsilon \leq 10^{-5}$, in good agreement with Eq.~(\ref{eq:NumericalQutrit}).}
\end{figure}
In Fig.~\ref{fig:QutritNumerics}
we show the numerical probability for obtaining a certain value of
$Et$ for $10^{6}$ samples. The numerical results suggest the following
lower bound for $Et$: 
\begin{equation}
Et\geq\frac{4}{9}\pi + \varepsilon,\label{eq:NumericalBound}
\end{equation}
where $\varepsilon$ is numerically upper bounded as $\varepsilon \leq 10^{-5}$, in good agreement with Eq.~(\ref{eq:NumericalQutrit}). A Hamiltonian saturating the bound~(\ref{eq:NumericalQutrit}) is given by $\tilde{H}=-\ket{\tilde{\alpha}}\!\bra{\tilde{\alpha}}$
with 
\begin{equation}
\ket{\tilde{\alpha}}=\frac{1}{\sqrt{3}}(\ket{0}+e^{i\frac{2}{3}\pi}\ket{1}+\ket{2}).
\end{equation}

A direct comparison of Theorem~\ref{thm:UnbiasedQutrit} with the corresponding qubit bound (\ref{eq:UnbiasedBasisQubit}) shows that establishing an unbiased qutrit basis requires less time, compared to an unbiased qubit basis for the same mean energy $E$. In the following, we will discuss the main differences between the qubit and the qutrit setting.

If a single-qubit
unitary $U=e^{-iHt}$ is optimal for rotating the basis $\{\ket{0},\ket{1}\}$
onto an unbiased basis, then the unitary $U^{2}=e^{-2iHt}$ permutes
the basis elements $\{\ket{0},\ket{1}\}$. This is no longer the case
in the qutrit setting. For this, note that an optimal Hamiltonian
for the qutrit transition $\ket{n_{+}}=e^{-iHt}\ket{n}$ is given
by $H=\ket{\alpha}\!\bra{\alpha}$, with 
\begin{equation}
\ket{\alpha}=\frac{1}{\sqrt{3}}\left(\ket{0}+e^{-i\frac{2}{3}\pi}\ket{1}+\ket{2}\right).
\end{equation}
For the optimal Hamiltonian we can evaluate the fidelity between the
initial state $\ket{0}$ and the time-evolved state $e^{-iHt}\ket{0}$:
\begin{equation}
|\!\braket{0|e^{-iHt}|0}\!|^{2}=\frac{1}{9}[5+4\cos(t)].\label{eq:Fidelity}
\end{equation}
Note that the right-hand side of Eq.~(\ref{eq:Fidelity}) is never zero, which means that the evolution never permutes $\ket{0}$ with
another basis element, and the same can be shown for the states $\ket{1}$
and $\ket{2}$.

Moreover, if the single-qubit unitary $U$ permutes the basis states
$\{\ket{0},\ket{1}\}$, then $\sqrt{U}$ always rotates the $\{\ket{0},\ket{1}\}$
basis onto an unbiased basis. This is no longer the case in the qutrit
setting, as can be seen by inspection, with the permutation $U=\sum_{n=0}^{2}\ket{(n+1)\,\,\!\!\!\!\!\mod\!3}\!\bra{n}$.
We further obtain 
\begin{equation}
\sqrt{U}=\frac{1}{3}\left(\begin{array}{ccc}
2 & -1 & 2\\
2 & 2 & -1\\
-1 & 2 & 2
\end{array}\right),
\end{equation}
and thus $\sqrt{U}\ket{n}$ is not a maximally coherent state for any $0\leq n\leq2$. It can be verified by inspection that also $U^{1/3}$ does not transform any of the states $\ket{n}$ into a maximally coherent state.

So far, we considered systems of dimension $2$ and $3$. We will now go one step further, giving the minimal evolution time for an unbiased basis for two-qubit systems.
\begin{thm} \label{thm:UnbiasedTwoQubits}
The time for establishing an unbiased two-qubit basis is bounded below as 
\begin{equation}
    T_{\mathrm{unbiased}} \geq \frac{\pi}{4E}. \label{eq:UnbiasedTwoQubits}
\end{equation}
There exists a two-qubit Hamiltonian achieving this bound.
\end{thm}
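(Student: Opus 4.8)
\emph{Overview.} The plan is to prove the two assertions separately: the lower bound $Et\ge\pi/4$ for an arbitrary Hamiltonian, and an explicit Hamiltonian saturating it. For the lower bound the key is to reduce the problem to a single scalar, the trace of the transformation unitary $U=e^{-iHt}$. Every element of an unbiased two-qubit basis has modulus-$1/\sqrt d=1/2$ overlap with every computational basis state, so all entries of the matrix of $U$ in the computational basis have modulus $1/2$ (equivalently, $2U$ is a complex Hadamard matrix of order four), giving $|\Tr U|\le 4\cdot\tfrac12=2$. Absorbing a global phase into $U$ — equivalently, shifting $H$ so that $E_0=0$, which does not affect the physical transformation — the eigenvalues of $U$ become $1,e^{-i\beta_1},e^{-i\beta_2},e^{-i\beta_3}$ with $\beta_k\ge 0$, and, as in the qutrit bookkeeping around Eqs.~(\ref{eq:NumericalTest-2})--(\ref{eq:NumericalTest}), one has $Et=\tfrac14(\beta_1+\beta_2+\beta_3)$. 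If some $\beta_k\ge 2\pi$ the claim is immediate, so assume $\beta_k\in[0,2\pi)$. Then $\Rea\Tr U\le 2$ reads $\cos\beta_1+\cos\beta_2+\cos\beta_3\le 1$.

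\emph{Finishing the lower bound.} What remains is the elementary fact that $\beta_k\ge 0$ together with $\cos\beta_1+\cos\beta_2+\cos\beta_3\le 1$ forces $\beta_1+\beta_2+\beta_3\ge\pi$. If $s:=\beta_1+\beta_2+\beta_3<\pi$, then every $\beta_k<\pi$; applying $\cos a+\cos b=2\cos\tfrac{a+b}{2}\cos\tfrac{a-b}{2}$ first to $(\beta_1,\beta_2)$ and then to $(\beta_1+\beta_2,\beta_3)$, and using that the relevant half-sums lie in $[0,\pi/2)$ while $|a-b|\le a+b$, one obtains $\cos\beta_1+\cos\beta_2+\cos\beta_3\ge 2+\cos s>1$, a contradiction. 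Hence $\beta_1+\beta_2+\beta_3\ge\pi$, so $Et\ge\pi/4$, i.e. $T_{\mathrm{unbiased}}\ge\pi/(4E)$; the extremal configuration $(\beta_1,\beta_2,\beta_3)=(0,0,\pi)$ already shows this is as tight as it can be.

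\emph{Achievability.} I would take $W=\mathrm{SWAP}\cdot(U_{\mathrm H}\otimes U_{\mathrm H})$ with $U_{\mathrm H}$ the single-qubit Hadamard gate. Its matrix in the computational basis equals a Hadamard matrix of order four divided by two, so $W$ maps the computational basis onto an unbiased basis. Since $\mathrm{SWAP}$ commutes with the symmetric operator $U_{\mathrm H}\otimes U_{\mathrm H}$ and $U_{\mathrm H}^{2}=\openone$, one has $W^{2}=\openone$ and $\Tr W=\Tr[U_{\mathrm H}^{2}]=2$, so the eigenvalues of $W$ are $\{1,1,1,-1\}$. Let $\ket{w}=\tfrac12(\ket{00}-\ket{01}-\ket{10}-\ket{11})$ be the unique $(-1)$-eigenvector (a maximally coherent state) and set $H_{\mathrm{opt}}=\tfrac{\pi}{t}\ket{w}\!\bra{w}$; then $e^{-iH_{\mathrm{opt}}t}=\openone-2\ket{w}\!\bra{w}=W$, the ground-state energy vanishes, $E=\tfrac14\Tr H_{\mathrm{opt}}=\pi/(4t)$, and therefore $Et=\pi/4$, saturating~(\ref{eq:UnbiasedTwoQubits}).

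\emph{Main obstacle.} None of the individual computations is hard; the step needing care is the opening reduction, namely arguing that for an arbitrary Hamiltonian the product $Et$ is governed precisely by $\tfrac14\sum_k\beta_k$ with the $\beta_k$ the smallest nonnegative phases of the eigenvalues of $U$, so that the single inequality $|\Tr U|\le 2$ suffices. The genuine content is the recognition that performing the swap and the two Hadamards simultaneously corresponds to a rank-one Hamiltonian, whose one nonzero eigenvalue contributes only $\pi/4$ (rather than $\pi/2$) to $Et$ — which is why the optimal two-qubit time coincides with the single-qubit time of Eq.~(\ref{eq:UnbiasedBasisQubit}) rather than doubling it.
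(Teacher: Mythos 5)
Your proposal is correct, and its skeleton matches the paper's: both proofs rest on the observation that unbiasedness forces every matrix element of $U$ in the computational basis to have modulus $1/2$, hence $\Rea\Tr U=\sum_k\cos(E_kt)\leq 2$, and both then show that with $E_0=0$ the remaining phases must sum to at least $\pi$, giving $Et\geq\pi/4$; both also saturate the bound with the ``two Hadamards plus swap'' dynamics. The differences are in execution, and they are to your advantage in places. For the scalar inequality, the paper assumes $\alpha_1+\alpha_2+\alpha_3<\pi$, bounds $\cos\alpha_3>-\cos(\alpha_1+\alpha_2)$, and then runs a multivariable critical-point and boundary analysis to show $\cos\alpha_1+\cos\alpha_2-\cos(\alpha_1+\alpha_2)\geq 1$; your double application of the sum-to-product identity proves the equivalent statement ($\beta_k\geq 0$, $\sum_k\beta_k<\pi$ imply $\sum_k\cos\beta_k\geq 2+\cos s>1$) with purely elementary trigonometry and no calculus, which is shorter and less error-prone. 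For achievability, your $H_{\mathrm{opt}}=\frac{\pi}{t}\ket{w}\!\bra{w}$ with $\ket{w}$ the unique $-1$ eigenvector of $\mathrm{SWAP}\cdot(U_{\mathrm H}\otimes U_{\mathrm H})$ is exactly the paper's Hamiltonian $-\sigma_x\otimes\sigma_z+\sigma_y\otimes\sigma_y-\sigma_z\otimes\sigma_x$ shifted by the identity (which leaves $E$ and the physical transformation unchanged, up to the global phase $e^{i\pi/4}$ appearing in the paper's output states); the paper verifies its Pauli-form Hamiltonian by direct computation, whereas you derive the optimal generator structurally from $W^2=\openone$ and $\Tr W=\Tr U_{\mathrm H}^2=2$, which makes the rank-one nature of the optimal Hamiltonian transparent. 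All intermediate claims you use (the eigenvector $\ket{w}$, $e^{-iH_{\mathrm{opt}}t}=\openone-2\ket{w}\!\bra{w}=W$, $E=\pi/(4t)$) check out, so I see no gap.
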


\noindent Remarkably, this bound is the same as for single-qubit systems, see Eq.~(\ref{eq:UnbiasedBasisQubit}). The Hamiltonian saturating Eq.~(\ref{eq:UnbiasedTwoQubits}) is given as 
\begin{equation}
    H = -\sigma_x \otimes \sigma_z + \sigma_y \otimes \sigma_y - \sigma_z \otimes \sigma_x. \label{eq:TwoQubitHamiltonian}
\end{equation}
The eigenvalues of this Hamiltonian are $3$, $-1$, $-1$, $-1$, and the mean energy of $H$ is given as $E = 1$. For $t = \pi/4$ we now define the unitary $U = e^{-itH}$. The action of this unitary onto the computational basis of two qubits is as follows:
\begin{subequations}
\begin{align}
    U (\ket{0}\ket{0}) &= e^{i\pi/4} \ket{+} \ket{+}, \\
    U (\ket{0}\ket{1}) &= e^{i\pi/4} \ket{-} \ket{+}, \\
    U (\ket{1}\ket{0}) &= e^{i\pi/4} \ket{+} \ket{-}, \\
    U (\ket{1}\ket{1}) &= e^{i\pi/4} \ket{-} \ket{-}.
\end{align}
\end{subequations}
This shows that the Hamiltonian in Eq.~(\ref{eq:TwoQubitHamiltonian}) indeed transforms a two-qubit basis onto an unbiased basis within time $\pi/(4E)$. We refer to Appendix~\ref{sec:ProofUnbiasedTwoQubits} for the proof of Theorem~\ref{thm:UnbiasedTwoQubits} and more details.

The results presented so far show that the optimal time for transformation onto an unbiased basis is the same for single-qubit and two-qubit systems, and in both cases given by $\pi/(4E)$. For a qutrit system we have a shorter time $2\pi/(9E)$. We will now extend these results to many-qubit systems. As we will see, there exists a universal bound for $n$-qubit systems, allowing us to establish an unbiased basis within finite time.

\begin{thm} \label{thm:nQubits}
For systems with $n$ qubits, the minimal time for estabishing an unbiased basis is bounded above as 
\begin{equation}
    T_\mathrm{unbiased} \leq \frac{\pi}{2E}.
\end{equation}
\end{thm}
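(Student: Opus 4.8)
The plan is to build a suitable $n$-qubit Hamiltonian out of the two-qubit building block from Theorem~\ref{thm:UnbiasedTwoQubits} and the single-qubit building block from Eq.~(\ref{eq:UnbiasedBasisQubit}), and then bound its mean energy using additivity~(\ref{eq:MeanEnergy}). First I would recall that we already have a two-qubit Hamiltonian $H_2$, Eq.~(\ref{eq:TwoQubitHamiltonian}), with mean energy $E=1$ whose evolution at $t=\pi/4$ maps the computational basis onto a product unbiased basis (local Hadamards up to a swap and a global phase), and a single-qubit Hamiltonian $H_1$ with mean energy $E=1$ whose evolution at $t=\pi/4$ maps $\{\ket0,\ket1\}$ onto $\{\ket+,\ket-\}$ (from Appendix~\ref{sec:SingleQubit}). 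Since we may rescale any Hamiltonian $H\to\lambda H$ and $t\to t/\lambda$ without changing the product $Ht$, the only quantity that matters is $Et$ evaluated at the time the transformation is achieved, and for both blocks this equals $\pi/4$.

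The key step is the composition. For $n=2m$ even, take $H=\bigoplus_{k=1}^m H_2^{(k)}$ acting as the two-qubit Hamiltonian on the $k$-th pair of qubits and trivially elsewhere; this is a non-interacting Hamiltonian in the sense of Eq.~(\ref{eq:MeanEnergy}), so its mean energy is $E=\sum_k E_k = m$, while at $t=\pi/4$ each factor independently produces its unbiased basis, so the full product basis is unbiased. Hence $Et = m\cdot\pi/4 = (n/2)\cdot\pi/4$, i.e. $T_\mathrm{unbiased}\le \pi/(4E)$ in this case, even better than claimed. For $n$ odd, replace one of the two-qubit blocks by a single-qubit $H_1$ block (so $n = 2(m-1)+1$), giving $E = (m-1)+1 = m = (n+1)/2$ and again $Et=\pi/4$ per block is not quite what we want — here I must be a little more careful: the single-qubit block alone contributes $Et=\pi/4$ at the common time, so with the common time $t=\pi/4$ the total $Et$ still equals $(\#\text{blocks})\cdot$ nothing — rather, $E = \sum_k E_k$ and $t$ is fixed, so $Et = t\sum_k E_k = (\pi/4)\cdot(\#\text{blocks with }E_k=1)$. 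Thus $Et = (\pi/4)\cdot\lceil n/2\rceil$, which for all $n\ge1$ gives $T_\mathrm{unbiased}\le \pi/(4E)\le \pi/(2E)$ once we observe $\lceil n/2\rceil/E$-type bookkeeping; the cleanest statement that survives uniformly in $n$ is exactly $T_\mathrm{unbiased}\le\pi/(2E)$, which is what the theorem asserts, and the even-$n$ case already gives the stronger $\pi/(4E)$.

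The main obstacle — really the only subtlety — is the normalization bookkeeping: one must be careful that "mean energy" in Eq.~(\ref{eq:MeanEnergy}) is $\mathrm{Tr}[H/d]-E_0$ with $d=2^n$, and check that the additive formula $E_{AB}=E_A+E_B$ is applied with each local mean energy computed with its \emph{own} local dimension and local ground-state energy, so that the block-diagonal construction indeed yields $E=\sum_k E_k$ rather than some dimension-weighted average; once this is set up correctly, the bound $Et=(\pi/4)\cdot(\text{number of blocks})$ and hence $Et\le(\pi/2)E/E\cdot(\text{number of blocks})$... concretely $T_\mathrm{unbiased}=t=(\pi/4)(\text{\#blocks})/E \le \pi/(2E)$ follows since the number of blocks is at most $\lceil n/2\rceil\le \#\text{blocks}$ with $E=\#\text{blocks}$, giving $t=\pi/4<\pi/2$ — wait, this actually shows $T_\mathrm{unbiased}\le\pi/(4E)$ for \emph{all} $n$. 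I would therefore present the construction, note it achieves $\pi/(4E)$, and remark that the theorem's stated $\pi/(2E)$ is a (possibly deliberately weakened) consequence, while pointing to Appendix~\ref{sec:ProofUnbiasedTwoQubits} or a dedicated appendix for the routine verification that the block-diagonal Hamiltonian's evolution at $t=\pi/4$ sends $\{\ket n\}$ to a genuinely unbiased basis.
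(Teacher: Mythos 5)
There is a genuine gap: your mean-energy bookkeeping is exactly the trap you flag and then fall into. With the block construction $H=\sum_{k}H_{2}^{(k)}$ (each block acting on its own pair of qubits), additivity of the mean energy~(\ref{eq:MeanEnergy}) gives $E=\sum_{k}E_{k}=\#\text{blocks}=\lceil n/2\rceil$, while the common evolution time is $t=\pi/4$. Hence $Et=\lceil n/2\rceil\,\pi/4$, i.e. $T=t=\lceil n/2\rceil\,\pi/(4E)$, which \emph{grows linearly with} $n$ at fixed mean energy (equivalently, rescaling $H\to H/\lceil n/2\rceil$ to keep $E$ fixed stretches the time to $\lceil n/2\rceil\,\pi/4$). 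Your conclusion ``$T_{\mathrm{unbiased}}\le\pi/(4E)$ for all $n$'' divides by $E$ once too few: it only holds when $\lceil n/2\rceil\le 1$, and even the weaker claimed bound $\pi/(2E)$ follows from your construction only for $n\le 4$. For $n\ge 5$ the proposal proves nothing. Indeed, the paper explicitly notes that without interactions between the qubits the optimal time is $n\pi/(4E)$ --- the whole content of Theorem~\ref{thm:nQubits} is that an $n$-independent bound requires a genuinely interacting Hamiltonian.

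The paper's proof uses a single global Hamiltonian rather than a sum of local blocks: $H_{n}=V^{\otimes n}$ with $V$ the Hadamard gate. Since $H_{n}^{2}=\openone$, its spectrum is $\{\pm1\}$, so the mean energy is $E=\mathrm{Tr}[H_{n}]/2^{n}-E_{0}=0-(-1)=1$ \emph{independently of} $n$, and
\begin{equation}
e^{-iH_{n}t}=\cos(t)\,\openone-i\sin(t)\,H_{n},\qquad e^{-iH_{n}\pi/2}=-i\,V^{\otimes n},
\end{equation}
which maps the computational basis onto the (unbiased) Hadamard basis at $t=\pi/2$, giving $Et=\pi/2$ uniformly in $n$. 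If you want to salvage your approach, you would need an interacting building block whose mean energy stays $O(1)$ as qubits are added --- which is precisely what $V^{\otimes n}$ provides and what no tensor sum of few-qubit blocks can.
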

\begin{proof}
Consider the $n$ qubit Hamiltonian
\begin{equation}
    H_n = V^{\otimes n},
\end{equation}
where $V$ is the Hadamard gate. Note that the mean energy of $H_n$ is given as $E = 1$. We now define the unitary $U_n(t) = e^{-iH_nt}$. Using the fact that $H_n^2 = \openone$ it follows that 
\begin{equation}
    U_n(t) = \cos(t) \openone - i \sin(t) H_n.
\end{equation}
For $t=\pi/2$ we obtain  
\begin{equation}
    U_n(\pi/2) = -i V^{\otimes n}. \label{eq:Hadamard}
\end{equation}
This unitary transforms the computational basis of $n$ qubits into an unbiased basis, and the proof is complete.
\end{proof}

Theorem~\ref{thm:nQubits} shows that it is possible to establish an unbiased basis of $n$ qubits within time $\pi/(2E)$. We demonstrated this explicitly by presenting a Hamiltonian, which introduced interactions between all the qubits. Without interactions, i.e., if each of the qubits evolves independently, the optimal evolution time is given by $n\pi/(4E)$.

In the following, we present a general lower bound for the time required for establishing an unbiased basis for any $d$-dimensional system.

\begin{thm} \label{thm:LowerBoundGeneral}
The time for establishing an unbiased basis for a system of dimension $d$ is bounded below by 
\begin{equation}
T_{\mathrm{unbiased}}>\frac{\pi(d-1)}{4Ed}.
\end{equation}
\end{thm}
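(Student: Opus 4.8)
The plan is to reduce the requirement that $\{\ket n\}$ be mapped onto an unbiased basis to a single inequality for the trace of the implementing unitary, convert that into a constraint on the eigenphases of the generating Hamiltonian, and close with an elementary scalar estimate. Concretely, let $U=e^{-iHt}$ be any unitary carrying the computational basis onto an unbiased one. Taking the unbiasedness condition with $m=n$ gives $|\!\braket{n|U|n}\!|=1/\sqrt d$, so the triangle inequality yields $|\Tr U|=\bigl|\sum_n\braket{n|U|n}\bigr|\le\sqrt d$.

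I would then pass to the spectrum of $H$. Since shifting $H$ by a multiple of the identity changes neither the mean energy $E$ nor $U$ beyond a global phase, assume $H\ge 0$ with $0$ in its spectrum; write its eigenvalues as $E_j\ge 0$ and set $\epsilon_j:=E_j t\ge 0$. Then $Et=\tfrac1d\sum_j\epsilon_j$ while $|\Tr U|=\bigl|\sum_j e^{-i\epsilon_j}\bigr|$, so the previous step forces $\sum_j\cos\epsilon_j\le\sqrt d$, equivalently $\sum_j(1-\cos\epsilon_j)\ge d-\sqrt d$. The right-hand side is strictly positive for $d\ge 2$, so at least one $\epsilon_j$ is nonzero.

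The last ingredient is the elementary estimate $1-\cos x<\tfrac34 x$ for all $x>0$, i.e. $\sup_{x>0}\tfrac{1-\cos x}{x}<\tfrac34$; this follows because $g(x):=\tfrac34 x-1+\cos x$ satisfies $g(x)\ge\tfrac34 x-2\ge\tfrac14$ for $x\ge 3$, whereas on $(0,3]$ a short analysis of $g'=\tfrac34-\sin x$ (whose interior zeros are $\arcsin\tfrac34$ and $\pi-\arcsin\tfrac34$) shows $g>0$. Summing $\epsilon_j>\tfrac43(1-\cos\epsilon_j)$ over $j$ (a strict inequality because some $\epsilon_j\neq 0$) gives $Et=\tfrac1d\sum_j\epsilon_j>\tfrac{4}{3d}\sum_j(1-\cos\epsilon_j)\ge\tfrac43\bigl(1-\tfrac1{\sqrt d}\bigr)$. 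It remains to check the purely numerical inequality $\tfrac43\bigl(1-\tfrac1{\sqrt d}\bigr)\ge\tfrac{\pi(d-1)}{4d}$, which rearranges to $\sqrt d\ge\tfrac{3\pi}{16-3\pi}\approx1.43$ and is therefore true for every $d\ge 3$; for $d=2$ the statement already follows from the sharper single-qubit bound~(\ref{eq:UnbiasedBasisQubit}), since $\pi/(4E)>\pi/(8E)$. Combining these gives $T_{\mathrm{unbiased}}>\tfrac{\pi(d-1)}{4Ed}$.

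The step I expect to be most delicate is the calibration of the constant in the auxiliary inequality: one wants a single $\kappa$ with $1-\cos x\le x/\kappa$ for all $x\ge 0$ and, simultaneously, $\kappa\bigl(1-\tfrac1{\sqrt d}\bigr)\ge\tfrac{\pi(d-1)}{4d}$ for every $d$ of interest, and the admissible window is narrow, roughly $\tfrac43\le\kappa\le\min_{x>0}\tfrac{x}{1-\cos x}\approx1.38$. This is precisely why the clean choice $\kappa=\tfrac43$ works for $d\ge 3$ while forcing the borderline case $d=2$ to be absorbed into the already-established qubit bound; trying instead to reach the constant $\pi/4$ directly through Margolus--Levitin- or Mandelstam--Tamm-type estimates loses too much slack and does not attain the stated bound.
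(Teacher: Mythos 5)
Your proposal is correct, and it reaches the theorem by a genuinely different and more elementary route than the paper. Both arguments start from the same trace constraint: unbiasedness of the diagonal overlaps forces $|\Tr U|\leq\sqrt{d}$, hence $\sum_{j}\cos(E_{j}t)\leq\sqrt{d}$, which is exactly the paper's Eq.~(\ref{eq:Constraint}). From there the paper assumes $Et\leq\pi(d-1)/(4d)$ and performs a constrained minimization of $\sum_{j}\cos\alpha_{j}$ over the simplex $\sum_{j}\alpha_{j}\leq(d-1)\pi/4$, with a Lagrange-multiplier and boundary/critical-point analysis (including floor-function bookkeeping) to show the minimum exceeds $\sqrt{d}$, a contradiction. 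You instead rewrite the constraint as $\sum_{j}(1-\cos\epsilon_{j})\geq d-\sqrt{d}$ and apply the pointwise linear majorization $1-\cos x<\tfrac{3}{4}x$ for $x>0$ (your verification of this, splitting at $x=3$ and checking the interior minimum of $g(x)=\tfrac34 x-1+\cos x$ at $x=\pi-\arcsin\tfrac34$, is sound), which immediately yields $Et>\tfrac{4}{3}\bigl(1-1/\sqrt{d}\bigr)$; the purely numerical comparison $\tfrac{4}{3}(1-1/\sqrt{d})\geq\pi(d-1)/(4d)$, equivalent to $\sqrt{d}\geq 3\pi/(16-3\pi)\approx1.433$, indeed holds for all $d\geq3$, and your deferral of $d=2$ to the already-proven qubit bound $\pi/(4E)$ of Eq.~(\ref{eq:UnbiasedBasisQubit}) closes the remaining case, with the required strictness supplied by the observation that some $\epsilon_{j}>0$. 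What your approach buys is brevity and transparency, and in fact a stronger intermediate statement for large dimension: your bound $Et>\tfrac{4}{3}(1-1/\sqrt{d})$ tends to $4/3$, which exceeds the $\pi/4$ asymptote of the theorem (and is still consistent with the achievable value $\pi/2$ from Theorem~\ref{thm:nQubits}); the price is a slightly weaker constant at small $d$, which is why $d=2$ must be patched externally, whereas the paper's heavier optimization delivers the stated constant uniformly in $d$. Your side remark about the admissible window of constants is slightly imprecise (covering $d=2$ directly would require $\kappa\geq\tfrac{\pi}{4}(1+1/\sqrt{2})\approx1.34$, which is still below $\min_{x>0}x/(1-\cos x)\approx1.38$), but this does not affect the validity of the proof.
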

\noindent As we see, for large Hilbert space dimension the lower bound converges to $\pi/4E$. We refer to Appendix~\ref{sec:ProofLowerBound} for the proof of the theorem. For systems of dimension $6$ this bound can be improved slightly to $T \geq 0.227/E$, see Appendix \ref{sec:ProofLowerBound} for more details. Comparing this lower bound with the bound in the Theorem \ref{thm:nQubits}, we see that in the limit $n \rightarrow \infty$ the minimal time $T$ for establishing an unbiased basis of $n$ qubits fulfills $\pi/4E \leq T \leq \pi/2E$.

\medskip{}

\textbf{\emph{Speed limits for basis permutation.}} It is instrumental
to compare the above results to the speed limits for permuting the
basis $\{\ket{n}\}$: 
\begin{equation}
U\ket{n}=\ket{(n+1)\!\!\!\!\!\mod\!d}\label{eq:Uperm-1}
\end{equation}
for all $0\leq n\leq d-1$. 
\begin{prop}
The time for permuting a basis is bounded below by 
\begin{equation}
T_{\mathrm{perm}}\geq\frac{\pi(d-1)}{dE}.
\end{equation}
\end{prop}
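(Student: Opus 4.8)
The plan is to reduce the statement to a short optimization over the eigenphases of the permutation unitary, in the same spirit as the analysis around Eq.~(\ref{eq:NumericalTest}). First I would observe that the cyclic shift defined in Eq.~(\ref{eq:Uperm-1}) is a \emph{fixed} unitary $U$: it is the permutation matrix of a single $d$-cycle, its characteristic polynomial is $x^{d}-1$, and since this polynomial has distinct roots $U$ has $d$ simple eigenvalues, namely the $d$-th roots of unity $e^{2\pi i k/d}$, $k=0,\dots,d-1$, equally spaced on the unit circle. (If one instead allows output phases, $U\ket{n}=e^{i\phi_{n}}\ket{(n+1)\bmod d}$, the eigenvalues become the $d$-th roots of $e^{i\sum_{n}\phi_{n}}$; the argument below only uses \emph{differences} of eigenphases, so nothing changes.) Writing $U=e^{-iHt}$ and letting $E_{0},\dots,E_{d-1}$ denote the eigenvalues of $H$, each $E_{j}t$ is a real lift of an eigenphase of $U$, so after relabelling the eigenphases as $\theta_{k}=2\pi k/d \pmod{2\pi}$ one has $E_{j}t\in\theta_{k(j)}+2\pi\mathbb{Z}$ for some bijection $j\mapsto k(j)$; by the definition of the mean energy in Eq.~(\ref{eq:MeanEnergy}), $Et=\tfrac1d\sum_{j}E_{j}t-E_{0}t$ with $E_{0}=\min_{j}E_{j}$.

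Next I would bound $Et$ from below. Let the minimum $E_{0}t$ be attained at index $m$, so that $Et=\tfrac1d\sum_{j}(E_{j}t-E_{m}t)$ with every summand nonnegative. Since $E_{j}t-E_{m}t\equiv\theta_{k(j)}-\theta_{k(m)} \pmod{2\pi}$, each summand is at least its fractional representative $r_{j}:=(\theta_{k(j)}-\theta_{k(m)})\bmod 2\pi\in[0,2\pi)$, whence $Et\ge\tfrac1d\sum_{j}r_{j}$. The crucial observation is that the eigenphases $\{2\pi k/d\}$ form a coset of the order-$d$ cyclic subgroup of $\mathbb{R}/2\pi\mathbb{Z}$, so for \emph{every} base index $m$ the multiset $\{r_{j}\}_{j}$ again equals $\{2\pi k/d:k=0,\dots,d-1\}$. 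Therefore $\sum_{j}r_{j}=\sum_{k=0}^{d-1}2\pi k/d=\pi(d-1)$, which yields $Et\ge\pi(d-1)/d$, i.e. $T_{\mathrm{perm}}=t\ge\pi(d-1)/(dE)$.

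Finally I would note that the bound is tight: choosing the lifts $E_{j}t$ to lie in $[0,2\pi)$ makes all the inequalities above equalities, so the corresponding Hamiltonian (essentially the principal branch of $i\log U$) saturates $T_{\mathrm{perm}}=\pi(d-1)/(dE)$. The only step that needs genuine care is recognising the coset structure of the eigenphases, which is exactly what makes $\sum_{j}r_{j}$ independent of the base index $m$; once this is in hand, the rest is the same routine lift-counting already used for Theorems~\ref{thm:UnbiasedQutrit},~\ref{thm:nQubits} and~\ref{thm:LowerBoundGeneral}, and I do not anticipate any further obstacle.
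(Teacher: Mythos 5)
Your proof is correct and follows essentially the same route as the paper's: both rest on the fact that the eigenvalues of the cyclic shift are the $d$-th roots of unity, so that the sum of the eigenphases is $\pi(d-1)$, which combined with $E=\frac{1}{d}\sum_j E_j - E_0$ gives the bound. The only difference is that you handle the $2\pi$-lift ambiguity of $E_j t$ and the $-E_0$ offset explicitly (via the coset observation, turning the paper's equality $t\sum_j E_j=\pi(d-1)$ into the inequality $\sum_j (E_j-E_0)t\geq \pi(d-1)$), which is a more careful rendering of a step the paper glosses over rather than a different argument.
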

\begin{proof}
As we discuss in the Appendix~\ref{sec:Permutation}, the eigenvalues
of the permutation unitary~(\ref{eq:Uperm-1}) have the form 
\begin{equation}
\lambda_{j}=e^{-i\frac{2\pi j}{d}},
\end{equation}
where integer $j$ is in the range $0\leq j\leq d-1$. It follows
that for any permutation unitary $U=e^{-iHt}$ it must hold that 
\begin{equation}
t\sum_{j}E_{j}=\sum_{j}\frac{2\pi j}{d}=\pi(d-1).
\end{equation}
The proof of the proposition is complete by noting that $E=\sum_{j}E_{j}/d$. 
\end{proof}
Interestingly, for a given Hamiltonian $H$ there are only two options:
either the unitary $U=e^{-iHt}$ leads to permutation with $t=\pi(d-1)/(dE)$,
or the Hamiltonian never leads to a basis permutation. We further
note that our analysis applies only to permutations of the form~(\ref{eq:Uperm-1}).

\medskip{}

\textbf{\emph{Speed of evolution for coherence generation.}} We will
now present speed limits for the creation of quantum coherence under
unitary evolution. In particular, we are interested in the maximal
value of coherence $C_{\max}$ which can be achieved from a given
state $\rho$ within a fixed time~$t$: 
\begin{equation}
C_{\max}(\rho,t)=\max_{H}\left\{ C\left(e^{-iHt}\rho e^{iHt}\right)\right\} ,
\end{equation}
and the maximization is performed over all Hamiltonians $H$ with
average energy $E=\mathrm{Tr}[H]/d - E_{0}$. As a quantifier of coherence
we use the $\ell_{1}$-norm of coherence~\cite{BaumgratzPhysRevLett.113.140401,StreltsovRevModPhys.89.041003}
\begin{equation}
C(\rho)=\sum_{i\neq j}|\rho_{ij}|,\label{cl1}
\end{equation}
which can be estimated efficiently in experiments by using collective measurements~\cite{Yuan2020,Wu2021}.

We will first discuss the single-qubit setting. Recall that in this
case the unitary $U(t)=e^{-iHt}$ can be interpreted as a rotation
by an angle $2Et$ about the axis $\boldsymbol{n}$ of the Bloch sphere.
As for single-qubit states the amount of coherence $C$ corresponds
to the Euclidean distance to the incoherent axis, $C_{\max}(\rho,t)$
corresponds to the largest distance from the incoherent axis, maximized
over all rotations with a fixed angle $2Et$. The optimal rotation
axis $\boldsymbol{n}$ is orthogonal to the Bloch vector $\boldsymbol{r}$
and the incoherent axis, and $C_{\max}$ takes the following form:
\begin{equation}
C_{\max}(\rho,t)=|\boldsymbol{r}|\cos\left(\arcsin\left[\frac{|r_{z}|}{|\boldsymbol{r}|}\right] - 2Et\right).
\end{equation}
Note that $C_{\max}$ cannot be larger than $|\boldsymbol{r}|$, and
this value is attained for the time 
\begin{equation}
T_{\mathrm{mc}}=\frac{1}{2E}\arcsin\frac{|r_{z}|}{|\boldsymbol{r}|},
\end{equation}
in which case the final state is in the maximally coherent plane.
If the initial state is pure, it can be parametrised as 
\begin{equation}
\ket{\psi}=\cos(\theta/2)\ket{0}+e^{i\phi}\sin(\theta/2)\ket{1}\;,\label{psi0}
\end{equation}
and the maximal amount of coherence achievable in a given time $t$
takes the form 
\begin{equation}
C_{\max}(\ket{\psi},t)=\cos\left(\arcsin\left[\cos\theta\right] - 2Et\right).
\end{equation}

In the next step we will consider systems of arbitrary dimension $d\geq2$
and evaluate the minimal time for converting a pure state $\ket{\psi}$
into a maximally coherent state of the form 
\begin{equation}
\ket{+}_{d}=\frac{1}{\sqrt{d}}\sum_{j=0}^{d-1}e^{i\phi_{j}}\ket{j}\label{eq:maximallycoherent}
\end{equation}
with phases $\phi_{j}$. The following proposition gives a bound for
the evolution time $T(\ket{\psi}\rightarrow\ket{+}_{d})$. 
\begin{prop}
The time for converting a state $\ket{\psi}$ into a maximally coherent
state $\ket{+}_{d}$ via unitary evolution $U=e^{-iHt}$ is bounded
as
\begin{equation}
T(\ket{\psi}\rightarrow\ket{+}_{d})\ge\frac{1}{dE}\arccos\left[\frac{2}{d}\left(\sum_{j}|\!\braket{\psi|j}\!|\right)^{2}-1\right]. \label{eq:PureMC}
\end{equation}
\end{prop}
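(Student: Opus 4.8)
\emph{Proof proposal.} The plan is to adapt the Margolus--Levitin strategy, but to phrase the bound in terms of $\mathrm{Tr}[H]$ rather than a single expectation value. First I would shift the Hamiltonian by $E_{0}\openone$: this multiplies $U=e^{-iHt}$ only by a global phase, leaves $\ket{+}_{d}$ and the mean energy $E$ unchanged, and makes $H\geq0$ with $dE=\mathrm{Tr}[H]=\sum_{j}E_{j}$ (eigenvalues counted with multiplicity). Let $U$ realize $U\ket{\psi}=\ket{+}_{d}$ at time $t$, and consider the return amplitude $z=\braket{\psi|U|\psi}=\braket{\psi|+_{d}}$. Expanding $\ket{\psi}$ in the eigenbasis of $H$ gives $z=\sum_{j}p_{j}e^{-iE_{j}t}$ with $p_{j}=|\braket{j|\psi}|^{2}\geq0$ and $\sum_{j}p_{j}=1$.

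The first and main step is the speed-limit inequality $\arccos|z|\leq\tfrac{1}{2}\,t\,dE$. Put $\Theta=t\sum_{j}E_{j}=t\,dE$. If $\Theta>\pi$ there is nothing to prove, since $\arccos|z|\leq\pi/2$. If $\Theta\leq\pi$, I would rotate the amplitude by $e^{i\Theta/2}$ and take the real part: because $0\leq E_{j}t\leq\sum_{k}E_{k}t=\Theta$ for every $j$, the phases $\beta_{j}:=\Theta/2-E_{j}t$ all lie in $[-\Theta/2,\Theta/2]\subseteq[-\pi/2,\pi/2]$, so $\cos\beta_{j}\geq\cos(\Theta/2)\geq0$ and hence $|z|\geq\mathrm{Re}(e^{i\Theta/2}z)=\sum_{j}p_{j}\cos\beta_{j}\geq\cos(\Theta/2)\sum_{j}p_{j}=\cos(\Theta/2)$, which rearranges to $\arccos|z|\leq\Theta/2$.

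The second step connects $|z|$ to the quantity in the statement. Since $\ket{+}_{d}=\tfrac{1}{\sqrt{d}}\sum_{j}e^{i\phi_{j}}\ket{j}$, the triangle inequality gives $|z|=|\braket{\psi|+_{d}}|=\tfrac{1}{\sqrt{d}}|\sum_{j}e^{i\phi_{j}}\braket{\psi|j}|\leq\tfrac{1}{\sqrt{d}}\sum_{j}|\braket{\psi|j}|=:F$, and $F\leq1$ by Cauchy--Schwarz. As $\arccos$ is decreasing, $\arccos|z|\geq\arccos F$, so combining with the first step yields $t\,dE\geq2\arccos|z|\geq2\arccos F$. Finally I would use the identity $2\arccos F=\arccos(2F^{2}-1)$, valid for $F\in[0,1]$, together with $F^{2}=\tfrac{1}{d}(\sum_{j}|\braket{\psi|j}|)^{2}$, to obtain $t\,dE\geq\arccos[\tfrac{2}{d}(\sum_{j}|\braket{\psi|j}|)^{2}-1]$; dividing by $dE$ gives the claim.

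The step I expect to be delicate is the first one: one must keep track of the case distinction $\Theta\gtrless\pi$ and invoke $\cos\beta_{j}\geq\cos(\Theta/2)$ only where the phases actually sit in $[-\pi/2,\pi/2]$ (terms with $p_{j}=0$ are harmless, as one still has $E_{j}t\leq\Theta$). As a consistency check, the first inequality is saturated by $H=\mathrm{diag}(0,\omega)$ acting on an equal-weight superposition of its two eigenstates, and for $\ket{\psi}=\ket{0}$ the proposition reduces to the single-qubit time $\pi/(4E)$ of Eq.~(\ref{eq:UnbiasedBasisQubit}).
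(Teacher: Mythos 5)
Your proof is correct, and it reaches the paper's key intermediate inequality, Eq.~(\ref{eq:MCspeedlimitProof}), by a genuinely different route. The paper gets that inequality from Lemma~\ref{lem:PureStates}: it first proves a gap-based bound (Propositions~\ref{prop:Fmin} and~\ref{prop:Pure}), showing that the survival amplitude of any state under $e^{-iHt}$ is at least $\tfrac12\left|e^{-iE_{\mathrm{gap}}t}+1\right|$, which yields $t\geq E_{\mathrm{gap}}^{-1}\arccos\left(2|\!\braket{\psi_0|\psi_1}\!|^2-1\right)$, and only then relaxes $E_{\mathrm{gap}}\leq dE$. You instead shift the spectrum to be non-negative and run a Margolus--Levitin-style estimate directly on the trace, $|z|\geq\Rea\left(e^{i\Theta/2}z\right)\geq\cos(\Theta/2)$ with $\Theta=t\,dE$, handling $\Theta>\pi$ trivially; this is more elementary and self-contained (no minimization over initial states, no gap), at the price of not recording the sharper gap-dependent bound, which the paper also exploits when discussing saturation of Lemma~\ref{lem:PureStates}. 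The second half of your argument --- bounding $|\!\braket{\psi|+}_d\!|$ by $\tfrac{1}{\sqrt d}\sum_j|\!\braket{\psi|j}\!|$ via the triangle inequality and Cauchy--Schwarz, then using $2\arccos F=\arccos(2F^2-1)$ --- is exactly the paper's phase-matching evaluation of the maximal overlap in Eq.~(\ref{eq:MCoverlap}), and the inequality directions are handled correctly. One cosmetic point: you use $\ket{j}$ both for the eigenbasis of $H$ (first step) and for the incoherent basis (second step); since the two steps communicate only through the basis-independent number $|z|=|\!\braket{\psi|+}_d\!|$, this does not affect the logic, but the notation should be distinguished in a final write-up.
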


\begin{proof}
From Lemma~\ref{lem:PureStates} in Appendix~\ref{sec:PureStates}, it follows that the evolution time into a maximally coherent state is bounded as 
\begin{equation}
T\left(\ket{\psi}\rightarrow\ket{+}_d\right)\geq\frac{1}{dE}\arccos\left(2|\!\braket{\psi|+}_d\!|^{2}-1\right). \label{eq:MCspeedlimitProof}
\end{equation}
Thus, in order to obtain a bound which is valid for all maximally coherent states, we need to estimate the maximal overlap $|\!\braket{\psi|+}_d\!|$ over all states of the form~(\ref{eq:maximallycoherent}). Expanding the initial state $\ket{\psi}$ in the incoherent basis
$\{\ket{i}\}$ as 
\begin{equation}
\ket{\psi}=\sum_{j=0}^{d-1}c_{j}e^{i\alpha_{j}}\ket{j}
\end{equation}
with $c_{j}\geq0$, it is straightforward to see that the overlap
$|\!\braket{\psi|+}_{d}\!|^{2}$ is maximized if we set $\phi_{j}=\alpha_{j}$,
thus arriving at 
\begin{equation}
\max_{\ket{+}_d}|\!\braket{\psi|+}_{d}\!|^{2}=\frac{1}{d}\left(\sum_{j}|\!\braket{\psi|j}\!|\right)^{2}. \label{eq:MCoverlap}
\end{equation}
Alternatively, this result can be obtained following~\cite{RegulaPhysRevLett.121.010401,RegulaPhysRevA.98.052329}, noting that $\max_{\ket{+}_d}|\!\braket{\psi|+}_{d}\!|^{2}$ corresponds to the maximal fidelity between the state $\Lambda[\ket{\psi}\!\bra{\psi}]$ and the particular maximally coherent state $\ket{+}_d = \sum_j \ket{j}/\sqrt{d}$, maximized over all incoherent operations $\Lambda$. Using Eq.~(\ref{eq:MCoverlap}) in Eq.~(\ref{eq:MCspeedlimitProof}) completes the proof.
\end{proof}

\medskip{}

\textbf{\emph{Conclusions and outlook.}} We have investigated speed limits for basis change via unitary evolutions, providing bounds on the evolution time which are optimal for several interesting scenarios. 

For dimensions $d \leq 4$ we found the optimal evolution time required to convert the computational basis into an unbiased, i.e., maximally coherent basis. Perhaps surprisingly, the minimal evolution times coincide for $d=2$ and $d=4$, when Hamiltonians with the same mean energy $E$ are considered. Moreover, for $d=3$ the saturation of the speed limit prefers a special ordering of the basis that is unbiased with respect to the computational basis. We also showed that an $n$-qubit Hadamard gate can be implemented within time $\pi/2E$. This proves that in multi-qubit systems, a maximally coherent basis can be established within a period of time which is independent on the number of qubits. These results further imply that in multi-qubit systems interactive Hamiltonians can significantly reduce the evolution time, compared to the time for establishing an unbiased basis by evolving each qubit independently. We further showed that in the limit $d \rightarrow \infty$ the time for establishing an unbiased basis is at least $\pi/4E$. Speed limits for basis permutation are also discussed.

We have also investigates speed limits for generating a certain amount of quantum coherence, as well as minimal time to convert a pure state into a maximally coherent one.  We expect that our methods can also be used to derive minimal transformation times for general bases and other quantum resources, such as quantum entanglement and imaginarity~\cite{Hickey_2018,WuPhysRevLett.126.090401,WuPhysRevA.103.032401}.

\textbf{\emph{Acknowledgements.}} This work was supported by the National Science Centre, Poland, within the QuantERA II Programme (No 2021/03/Y/ST2/00178, acronym ExTRaQT) that has received funding from the European Union's Horizon 2020 research and innovation programme under Grant Agreement No 101017733 and the ``Quantum Coherence and Entanglement for Quantum Technology'' project, carried out within the First Team programme of the Foundation for Polish Science co-financed by the European Union under the European Regional Development Fund. P.H. acknowledges support by the Foundation for Polish Science
(IRAP project, ICTQT, contract no. 2018/MAB/5,
co-financed by EU within Smart Growth Operational
Programme). C.M. and D.B. acknowledge support by the EU QuantERA project QuICHE.

\bibliography{literature}


\appendix


\section{\label{sec:SingleQubit}Speed limits for single-qubit states}

A general single-qubit Hamiltonian has the form 
\begin{equation}
H=E_{+}\ket{E_{+}}\!\bra{E_{+}}+E_{-}\ket{E_{-}}\!\bra{E_{-}},\label{eq:H-Qubit-1}
\end{equation}
where the eigenvalues $E_{\pm}$ and eigenstates $\ket{E_{\pm}}$
can be parametrized as 
\begin{equation}
E_{\pm} = \left(G\pm E\right),\,\,\,\,\,\ket{E_{\pm}}\!\bra{E_{\pm}}=\frac{1}{2}\left(\openone\pm\boldsymbol{n}\cdot\boldsymbol{\sigma}\right).
\end{equation}
Here, $G$ and $E\geq0$ are real numbers, $\boldsymbol{n}=(n_{x},n_{y},n_{z})$
is a normalized vector, and $\boldsymbol{\sigma}=(\sigma_{x},\sigma_{y},\sigma_{z})$
contains the three Pauli operators. The Hamiltonian~(\ref{eq:H-Qubit-1})
can thus be equivalently expressed as 
\begin{equation}
H = \left(E\boldsymbol{n}\cdot\boldsymbol{\sigma}+G\openone\right).\label{eq:H-Qubit-2}
\end{equation}
Note that $E$ corresponds to the mean energy of the Hamiltonian:
\begin{equation}
E=\frac{1}{2}\mathrm{Tr}[H]-E_{-}.
\end{equation}
Equipped with these tools, we will now present a bound for the evolution
time between any two single-qubit states. 
\begin{prop}
\label{prop:tmin}The time for converting a single-qubit state $\rho_{0}$
into the state $\rho_{1}$ via unitary evolution $U=e^{-iHt}$ is
bounded as 
\begin{equation}
T(\rho_{0}\rightarrow\rho_{1})\geq\frac{1}{2E}\arccos\left(\frac{\boldsymbol{r}_{0}\cdot\boldsymbol{r}_{1}}{|\boldsymbol{r}_{0}||\boldsymbol{r}_{1}|}\right),\label{eq:tmin}
\end{equation}
where $\boldsymbol{r}_{i}$ is the Bloch vector of the state $\rho_{i}$. 
\end{prop}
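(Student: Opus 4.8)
The plan is to exploit the standard correspondence between single-qubit unitary evolution and rotations of the Bloch sphere. Writing $H$ in the form~(\ref{eq:H-Qubit-2}), the term $G\openone$ contributes only a global phase, so $U=e^{-iHt}$ acts on Bloch vectors as the rotation $R_{\boldsymbol{n}}(2Et)$ by angle $2Et$ about the axis $\boldsymbol{n}$. In particular $\boldsymbol{r}_{1}=R_{\boldsymbol{n}}(2Et)\,\boldsymbol{r}_{0}$, and since rotations are orthogonal transformations, $|\boldsymbol{r}_{1}|=|\boldsymbol{r}_{0}|$. (If $|\boldsymbol{r}_{0}|\neq|\boldsymbol{r}_{1}|$ no such $U$ exists and the claim is vacuous; we assume throughout $\boldsymbol{r}_{0}\neq\boldsymbol{0}$ so that the right-hand side is well defined.)

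First I would decompose $\boldsymbol{r}_{0}=\boldsymbol{r}_{\parallel}+\boldsymbol{r}_{\perp}$ into its components parallel and orthogonal to the rotation axis $\boldsymbol{n}$. The rotation fixes $\boldsymbol{r}_{\parallel}$ and rotates $\boldsymbol{r}_{\perp}$ by the angle $\varphi:=2Et$ within the plane orthogonal to $\boldsymbol{n}$, which yields
\begin{equation}
\boldsymbol{r}_{0}\cdot\boldsymbol{r}_{1}=|\boldsymbol{r}_{\parallel}|^{2}+|\boldsymbol{r}_{\perp}|^{2}\cos\varphi.
\end{equation}
Combining this with $|\boldsymbol{r}_{0}|^{2}=|\boldsymbol{r}_{1}|^{2}=|\boldsymbol{r}_{\parallel}|^{2}+|\boldsymbol{r}_{\perp}|^{2}$ gives
\begin{equation}
\frac{\boldsymbol{r}_{0}\cdot\boldsymbol{r}_{1}}{|\boldsymbol{r}_{0}||\boldsymbol{r}_{1}|}=\cos\varphi+\frac{|\boldsymbol{r}_{\parallel}|^{2}}{|\boldsymbol{r}_{0}|^{2}}\left(1-\cos\varphi\right)\geq\cos\varphi,
\end{equation}
since $1-\cos\varphi\geq0$.

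Then I would finish as follows. The left-hand side of~(\ref{eq:tmin}) is $\arccos$ of the above ratio, which lies in $[0,\pi]$; hence if $2Et\geq\pi$ the bound~(\ref{eq:tmin}) holds trivially. If $0\leq 2Et<\pi$, then since $\arccos$ is decreasing on $[-1,1]$ and the ratio is $\geq\cos(2Et)$, we get $\arccos\!\bigl(\boldsymbol{r}_{0}\cdot\boldsymbol{r}_{1}/(|\boldsymbol{r}_{0}||\boldsymbol{r}_{1}|)\bigr)\leq\arccos(\cos 2Et)=2Et$, which rearranges to~(\ref{eq:tmin}). Taking the infimum over all Hamiltonians $H$ of mean energy $E$ and times $t$ realizing $e^{-iHt}\rho_{0}e^{iHt}=\rho_{1}$ then gives the stated bound on $T(\rho_{0}\rightarrow\rho_{1})$.

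The only nontrivial ingredient is the geometric identity $\boldsymbol{r}_{0}\cdot\boldsymbol{r}_{1}=|\boldsymbol{r}_{\parallel}|^{2}+|\boldsymbol{r}_{\perp}|^{2}\cos\varphi$, i.e. that a rotation by $\varphi$ displaces every vector by an angular distance at most $\varphi$, with equality precisely when the vector is orthogonal to the axis. This is elementary, but it is where the factor $2E$ and the $\arccos$ enter, and it also shows when the bound is saturated: one chooses $\boldsymbol{n}$ orthogonal to both $\boldsymbol{r}_{0}$ and $\boldsymbol{r}_{1}$, which is possible exactly because $|\boldsymbol{r}_{0}|=|\boldsymbol{r}_{1}|$, turning the inequality above into an equality.
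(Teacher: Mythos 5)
Your proposal is correct and follows essentially the same route as the paper: interpreting $e^{-iHt}$ as a Bloch-sphere rotation by angle $2Et$ about $\boldsymbol{n}$, with the bound saturated when $\boldsymbol{n}$ is orthogonal to both $\boldsymbol{r}_{0}$ and $\boldsymbol{r}_{1}$. The only difference is that you spell out, via the decomposition $\boldsymbol{r}_{0}=\boldsymbol{r}_{\parallel}+\boldsymbol{r}_{\perp}$ and the case $2Et\geq\pi$, the inequality direction that the paper's proof leaves implicit, which is a welcome but not conceptually distinct addition.
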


\begin{proof}
Note that the unitary 
\begin{equation}
U(t)=e^{-iHt}=e^{-iGt}e^{-iEt\boldsymbol{n}\cdot\boldsymbol{\sigma}}
\end{equation}
can be interpreted as a rotation by an angle $2Et$ about the axis
$\boldsymbol{n}$ of the Bloch sphere. The minimal value for $Et$
is achieved by choosing the rotation axis $\boldsymbol{n}$ to be
orthogonal to both Bloch vectors $\boldsymbol{r}_{0}$ and $\boldsymbol{r}_{1}$:
\begin{align}
\boldsymbol{n} & =\frac{\boldsymbol{r}_{0}\times\boldsymbol{r}_{1}}{\left|\boldsymbol{r}_{0}\times\boldsymbol{r}_{1}\right|},\\
Et & = \frac{1}{2}\arccos\left(\frac{\boldsymbol{r}_{0}\cdot\boldsymbol{r}_{1}}{|\boldsymbol{r}_{0}||\boldsymbol{r}_{1}|}\right).\label{eq:tmin-2}
\end{align}
This completes the proof of the proposition. 
\end{proof}
Noting that $\mathrm{Tr}[\rho_{i}\rho_{j}]=(1+\boldsymbol{r}_{i}\cdot\boldsymbol{r}_{j})/2$
we can reformulate Eq.~(\ref{eq:tmin}) as follows: 
\begin{equation}
T(\rho_{0}\rightarrow\rho_{1})\geq\frac{1}{2E}\arccos\left(\frac{2\mathrm{Tr}[\rho_{0}\rho_{1}]-1}{\sqrt{(2\mathrm{Tr}[\rho_{0}^{2}]-1)(2\mathrm{Tr}[\rho_{1}^{2}]-1)}}\right).\label{eq:BoundQubit}
\end{equation}
The proof of Proposition~\ref{prop:tmin} implies that this bound
is tight, i.e., for any two single qubit-states $\rho_{0}$ and $\rho_{1}$,
there exists a Hamiltonian with mean energy $E$ saturating Eq.~(\ref{eq:BoundQubit}).
For pure qubit states this expression simplifies to the tight bound 
\begin{equation}
T(\ket{\psi_{0}}\rightarrow\ket{\psi_{1}})\geq\frac{1}{2E}\arccos\left(2|\!\braket{\psi_{0}|\psi_{1}}\!|^{2}-1\right).\label{eq:pure}
\end{equation}

For single-qubit systems, any unitary transforming $\ket{0}$ into $\ket{+} = (\ket{0}+\ket{1})/\sqrt{2}$ also transforms $\ket{1}$ into $\ket{-} = (\ket{0}-\ket{1})/\sqrt{2}$. For a transition from the computational basis $\{\ket{0},\ket{1}\}$
to an unbiased unbiased qubit basis we thus obtain
\begin{align}
T_{\mathrm{unbiased}} & \geq\frac{\pi}{4E},
\end{align}
as claimed in the main text.

\section{\label{sec:Unbiased-Qutrits}Unbiased bases for qutrits}

Up to an overall phase for each basis element, an arbitrary unbiased
basis (w.r.t. the computational basis) for a qutrit can be written
as \begin{subequations} 
\begin{align}
\ket{0_{+}} & =\frac{1}{\sqrt{3}}(\ket{0}+e^{i\alpha_{0,1}}\ket{1}+e^{i\alpha_{0,2}}\ket{2}),\\
\ket{1_{+}} & =\frac{1}{\sqrt{3}}(\ket{0}+e^{i\alpha_{1,1}}\ket{1}+e^{i\alpha_{1,2}}\ket{2}),\\
\ket{2_{+}} & =\frac{1}{\sqrt{3}}(\ket{0}+e^{i\alpha_{2,1}}\ket{1}+e^{i\alpha_{2,2}}\ket{2}),
\end{align}
\end{subequations} where the phases $\alpha_{i,j}$ need to fulfill
the condition 
\begin{equation}
1+e^{i(\alpha_{k,1}-\alpha_{l,1})}+e^{i(\alpha_{k,2}-\alpha_{l,2})}=3\delta_{k,l}.
\end{equation}
This condition determines the form of the basis to be either

\begin{subequations} \label{eq:UnbiasedQutrit1-1} 
\begin{align}
\ket{0_{+}} & =\frac{1}{\sqrt{3}}\left(\ket{0}+e^{i\left(\alpha_{0,1}+\frac{2}{3}\pi\right)}\ket{1}+e^{i\left(\alpha_{0,2}+\frac{4}{3}\pi\right)}\ket{2}\right),\\
\ket{1_{+}} & =\frac{1}{\sqrt{3}}\left(\ket{0}+e^{i\alpha_{0,1}}\ket{1}+e^{i\alpha_{0,2}}\ket{2}\right),\\
\ket{2_{+}} & =\frac{1}{\sqrt{3}}\left(\ket{0}+e^{i\left(\alpha_{0,1}-\frac{2}{3}\pi\right)}\ket{1}+e^{i\left(\alpha_{0,2}-\frac{4}{3}\pi\right)}\ket{2}\right),
\end{align}
\end{subequations} or \begin{subequations} \label{eq:UnbiasedQutrit2-1}
\begin{align}
\ket{0_{+}} & =\frac{1}{\sqrt{3}}\left(\ket{0}+e^{i\left(\alpha_{0,1}-\frac{2}{3}\pi\right)}\ket{1}+e^{i\left(\alpha_{0,2}-\frac{4}{3}\pi\right)}\ket{2}\right),\\
\ket{1_{+}} & =\frac{1}{\sqrt{3}}\left(\ket{0}+e^{i\alpha_{0,1}}\ket{1}+e^{i\alpha_{0,2}}\ket{2}\right),\\
\ket{2_{+}} & =\frac{1}{\sqrt{3}}\left(\ket{0}+e^{i\left(\alpha_{0,1}+\frac{2}{3}\pi\right)}\ket{1}+e^{i\left(\alpha_{0,2}+\frac{4}{3}\pi\right)}\ket{2}\right).
\end{align}
\end{subequations} If we now introduce the unbiased bases \begin{subequations}\label{eq:Unbiased-1-1}
\begin{align}
\ket{0_{+}} & =\frac{1}{\sqrt{3}}\left(\ket{0}+e^{i\frac{2}{3}\pi}\ket{1}+e^{i\frac{4}{3}\pi}\ket{2}\right),\\
\ket{1_{+}} & =\frac{1}{\sqrt{3}}\left(\ket{0}+\ket{1}+\ket{2}\right),\\
\ket{2_{+}} & =\frac{1}{\sqrt{3}}\left(\ket{0}+e^{-i\frac{2}{3}\pi}\ket{1}+e^{-i\frac{4}{3}\pi}\ket{2}\right),
\end{align}
\end{subequations} and \begin{subequations}\label{eq:Unbiased-2-1}
\begin{align}
\ket{\tilde{0}_{+}} & =\frac{1}{\sqrt{3}}\left(\ket{0}+e^{-i\frac{2}{3}\pi}\ket{1}+e^{-i\frac{4}{3}\pi}\ket{2}\right),\\
\ket{\tilde{1}_{+}} & =\frac{1}{\sqrt{3}}\left(\ket{0}+\ket{1}+\ket{2}\right),\\
\ket{\tilde{2}_{+}} & =\frac{1}{\sqrt{3}}\left(\ket{0}+e^{i\frac{2}{3}\pi}\ket{1}+e^{i\frac{4}{3}\pi}\ket{2}\right),
\end{align}
\end{subequations} we see that any basis of the form~(\ref{eq:UnbiasedQutrit1-1})
or (\ref{eq:UnbiasedQutrit2-1}) can be obtained from the basis~(\ref{eq:Unbiased-1-1})
or (\ref{eq:Unbiased-2-1}), respectively, by using the diagonal unitary
$V=\ket{0}\!\bra{0}+e^{i\alpha_{0,1}}\ket{1}\!\bra{1}+e^{i\alpha_{0,2}}\ket{2}\!\bra{2}$.

\section{\label{sec:ProofV}Speed limits for unitary rotated bases}

Let $\{\ket{\psi_j}\}$ and $\{\ket{\phi_j}\}$ be two complete orthonormal bases. A speed limit of the form 
\begin{equation}
T\left(\ket{\psi_{j}}\rightarrow\ket{\phi_{j}}\right)\geq\frac{g}{E}\label{eq:SpeedLimitUnitary}
\end{equation}
directly leads to a speed limit for any basis which can be obtained
from $\{\ket{\phi_{j}}\}$ via a unitary $V=\sum_{j}e^{i\alpha_{j}}\ket{\psi_{j}}\!\bra{\psi_{j}}$:
\begin{equation}
T\left(\ket{\psi_{j}}\rightarrow V\ket{\phi_{j}}\right)\geq\frac{g}{E}.\label{eq:GeneralSpeedLimitV}
\end{equation}
The speed limit~(\ref{eq:GeneralSpeedLimitV}) is tight whenever
Eq.~(\ref{eq:SpeedLimitUnitary}) is tight. To prove this, let $H$
be a Hamiltonian such that 
\begin{equation}
e^{-iHt}\ket{\psi_{j}}=\ket{\phi_{j}}.
\end{equation}
Then the Hamiltonian $H'=VHV^{\dagger}$ achieves the transformation
\begin{equation}
e^{-iH't}\ket{\psi_{j}}=e^{-i\alpha_{j}}V\ket{\phi_{j}},
\end{equation}
which can be seen by using the expression $e^{-iH't}=Ve^{-iHt}V^{\dagger}$.
Noting that $H$ and $H'$ have the same mean energy $E$, we see
that Eq.~(\ref{eq:SpeedLimitUnitary}) implies the speed limit~(\ref{eq:GeneralSpeedLimitV})
for any unitary $V$ which is diagonal in the $\{\ket{\psi_{j}}\}$
basis. Moreover, the speed limit~(\ref{eq:GeneralSpeedLimitV}) is
tight for all diagonal unitaries $V$ whenever Eq.~(\ref{eq:SpeedLimitUnitary})
is tight.

As we have seen in Appendix~\ref{sec:Unbiased-Qutrits}, any unbiased
basis of a qutrit can be created from the basis $\{\ket{n_{+}}\}$
or $\{\ket{\tilde{n}_{+}}\}$ {[}see Eqs.~(\ref{eq:Unbiased-1-1})
and~(\ref{eq:Unbiased-2-1}){]} via a diagonal unitary $V$. In combination
with the arguments mentioned above, this implies that speed limits
for the transitions $\{\ket{n}\}\rightarrow\{\ket{n_{+}}\}$ and $\{\ket{n}\}\rightarrow\{\ket{\tilde{n}_{+}}\}$
will also lead to speed limits for general unbiased qutrit bases $\{\ket{n}\}\rightarrow\{V\ket{n_{+}}\}$
and $\{\ket{n}\}\rightarrow\{V\ket{\tilde{n}_{+}}\}$.

\section{\label{sec:Proof-Theorem1}Proof of Theorem~\ref{thm:UnbiasedQutrit}}
Before we focus on the case $d=3$ we will discuss the problem for general $d$. For this, let $U=e^{-iHt}$ be a unitary achieving
the transformation $\{\ket{n}\} \rightarrow \{\ket{n_{+}}\}$, where $\{\ket{n_{+}}\}$ is now a maximally coherent basis of dimension $d$. Any unitary achieving the desired transformation
must be of the form 
\begin{equation}
U=\sum_{n=0}^{d-1}e^{i\phi_{n}}\ket{n_{+}}\!\bra{n}
\end{equation}
 with some phases $\phi_{n}$. We further obtain 
\begin{equation}
\mathrm{Tr}[U+U^{\dagger}]=\sum_{n=0}^{d-1}\left(e^{i\phi_{n}}\braket{n|n_{+}}+e^{-i\phi_{n}}\braket{n_{+}|n}\right).
\end{equation}
Noting that $\braket{n|n_{+}}=e^{i\gamma_{n}}/\sqrt{d}$ with some
phases $\gamma_{n}$ we arrive at the inequality
\begin{equation}
-2\sqrt{d}\leq\mathrm{Tr}[U+U^{\dagger}]\leq2\sqrt{d}.
\end{equation}
On the other hand, recalling that $U=e^{-iHt}$ with a Hamiltonian
$H$ we obtain 
\begin{equation}
\mathrm{Tr}[U+U^{\dagger}]=2\sum_{i}\cos(E_{i}t),
\end{equation}
where $E_{i}$ are the eigenvalues of the Hamiltonian. In summary, for any unitary transformation $U=e^{-iHt}$ leading
to the transformation $\{\ket{n}\}\rightarrow\{\ket{n_{+}}\}$ it
must hold that 
\begin{equation}
-\sqrt{d}\leq\sum_{i}\cos(E_{i}t)\leq\sqrt{d}.\label{eq:Constraint}
\end{equation}

We will now consider $d=3$. In this case, we will show that any unitary $U=e^{-iHt}$ leading to the transformation
$\{\ket{n}\}\rightarrow\{\ket{n_{+}}\}$ fulfills 
\begin{equation}
Et\geq\frac{2}{9}\pi.\label{eq:QutritBound2}
\end{equation}
Assuming that $E_{i}$ are in increasing order, we see that $E\geq (E_{2}-E_{0})/3$.
Thus, for proving Eq.~(\ref{eq:QutritBound2}) it is enough to prove
that 
\begin{equation}
(E_{2}-E_{0})t\geq\frac{2}{3}\pi.\label{eq:QutritBound3}
\end{equation}
We will prove this by contradiction, assuming that the transformation
is possible with a unitary violating Eq.~(\ref{eq:QutritBound3}).
Violation of Eq.~(\ref{eq:QutritBound3}) implies that 
\begin{equation}
(E_{1}-E_{0})t\leq\frac{\pi}{3}\,\,\,\,\mathrm{or}\,\,\,\,(E_{2}-E_{1})t\leq\frac{\pi}{3}.
\end{equation}
In the first case $(E_{1}-E_{0})t\leq\pi/3$, we can set (without
loss of generality) $E_{0}t=-\pi/6$, which implies the inequalities
\begin{equation}
\left|E_{1}t\right|\leq\frac{\pi}{6},\,\,\,\,E_{2}t<\frac{\pi}{2}.
\end{equation}
It follows that 
\begin{equation}
\sum_{i}\cos(E_{i}t)>2\cos\left(\frac{\pi}{6}\right),\label{eq:QutritContradiction}
\end{equation}
which is a contradiction to Eq.~(\ref{eq:Constraint}). The remaining
case $(E_{2}-E_{1})t\leq\pi/3$ can be treated similarly, by choosing
(without loss of generality) $E_{2}t=\pi/6$, thus obtaining the following
inequalities: 
\begin{equation}
\left|E_{1}t\right|\leq\frac{\pi}{6},\,\,\,\,E_{0}t>-\frac{\pi}{2}.
\end{equation}
Also in this case we obtain the inequality~(\ref{eq:QutritContradiction}),
in contradiction to Eq.~(\ref{eq:Constraint}). This completes
the proof of the bound~(\ref{eq:QutritBound2}). Since the methods presented above apply for any qutrit basis which is unbiased with respect to the computational basis, this completes the proof of Theorem~\ref{thm:UnbiasedQutrit}.

\section{\label{sec:Proof-Proposition2}Proof of Proposition~\ref{prop:tight}}

According to Theorem~\ref{thm:UnbiasedQutrit}, we have the following inequalities for transition into the bases~(\ref{eq:Unbiased-1-1})
and~(\ref{eq:Unbiased-2-1}):
\begin{subequations} 
\begin{align}
T(\ket{n} & \rightarrow\ket{n_{+}})\geq\frac{2\pi}{9E},\label{eq:UnbiasedQutritBoundAppendix-1}\\
T(\ket{n} & \rightarrow\ket{\tilde{n}_{+}})\geq\frac{2\pi}{9E}. \label{eq:eq:UnbiasedQutritBoundAppendix-2}
\end{align}
\end{subequations}  
As can be checked by inspection, Eq.~(\ref{eq:UnbiasedQutritBoundAppendix-1})
is saturated for the basis~(\ref{eq:Unbiased-1-1}) by the Hamiltonian
$H=\ket{\alpha}\!\bra{\alpha}$ with 
\begin{equation}
\ket{\alpha}=\frac{1}{\sqrt{3}}\left(\ket{0}+e^{-i\frac{2}{3}\pi}\ket{1}+\ket{2}\right).\label{eq:alpha-2}
\end{equation}

We will now prove that the inequality~(\ref{eq:eq:UnbiasedQutritBoundAppendix-2})
is strict for the basis~(\ref{eq:Unbiased-2-1}), i.e., there is
no evolution $e^{-iHt}$ leading to the transformation $\ket{n}\rightarrow\ket{\tilde{n}_{+}}$
within the time $t=2\pi/(9E)$. Assume -- by contradiction -- that
the bound is saturated for some unitary $U=e^{-iHt}$: 
\begin{equation}
\ket{\tilde{n}_{+}}=e^{-iHt}\ket{n},\,\,\,\,t=\frac{2\pi}{9E}.\label{eq:Contradiction}
\end{equation}
Recalling that $E_{i}$ are in decreasing order and following the
arguments from the proof of Theorem~\ref{thm:UnbiasedQutrit}, it
must be that 
\begin{align}
E_1 &= E_0, \\
(E_{2}-E_{0})t &= \frac{2}{3}\pi.
\end{align}
Without loss of generality we can choose 
\begin{equation}
E_{0}t=E_{1}t=-\frac{\pi}{6},\,\,\,\,E_{2}t=\frac{\pi}{2}.
\end{equation}
Summarizing these arguments, there exists a unitary $U=e^{-iHt}$
fulfilling Eq.~(\ref{eq:Contradiction}) and having eigenvalues 
\begin{equation}
\lambda_{0}=\lambda_{1}=e^{i\frac{\pi}{6}},\,\,\,\,\lambda_{2}=e^{-i\frac{\pi}{2}},\label{eq:lambda}
\end{equation}
which implies that it fulfills 
\begin{equation}
\mathrm{Tr}[U+U^{\dagger}]=2\sqrt{3}.\label{eq:Extreme}
\end{equation}

On the other hand, the unitary also admits the form
\begin{equation}
    U=\sum_{n=0}^{2}e^{i\phi_{n}}\ket{\tilde{n}_{+}}\!\bra{n},
\end{equation}
with some phases $\phi_n$. We find that 
\begin{align}
\mathrm{Tr}\left[U+U^{\dagger}\right] & =\frac{2}{\sqrt{3}}\left(\cos\phi_{0}+\cos\phi_{1}\right)\nonumber \\
 & -\frac{1}{\sqrt{3}}\cos\phi_{2}+\sin\phi_{2}.
\end{align}
Together with Eq.~(\ref{eq:Extreme}) we obtain
\begin{align}
\frac{2}{\sqrt{3}}\left(\cos\phi_{0}+\cos\phi_{1}\right) -\frac{1}{\sqrt{3}}\cos\phi_{2}+\sin\phi_{2} = 2 \sqrt{3}.
\end{align}
This equation has a unique solution in the range $0\leq\phi_{i}\leq2\pi$, given by
\begin{equation}
\phi_{0}=\phi_{1}=0,\,\,\,\,\phi_{2}=\frac{2}{3}\pi.\label{eq:QutritSolution-2}
\end{equation}
This implies that the eigenvalues of $U$ must be 
\begin{equation}
\mu_{0}=\mu_{1}=e^{-i\frac{\pi}{6}},\,\,\,\,\mu_{2}=e^{i\frac{\pi}{2}},
\end{equation}
which is a contradiction to Eq.~(\ref{eq:lambda}). This completes
the proof of the proposition.

\section{Proof of Theorem~\ref{thm:UnbiasedTwoQubits}} \label{sec:ProofUnbiasedTwoQubits}
We will now focus on the case $d=4$. For this case we will prove the lower bound 
\begin{equation}
    Et \geq \frac{\pi}{4}. \label{eq:ProofD4bound}
\end{equation}
We will prove this by contradiction, assuming that there exists a unitary $U=e^{-iHt}$ transforming $\{\ket{n}\}$ onto a maximally coherent basis with 
\begin{equation}
    Et < \frac{\pi}{4}. \label{eq:ProofD4}
\end{equation}
Without loss of generality we can assume that $E_0=0$, which implies $E=(E_1+E_2+E_3)/4$.

We now define $\alpha_i = E_i t$. Note that $\pi > \alpha_i \geq 0$. Due to Eq.~(\ref{eq:ProofD4}) we have $\alpha_3 < \pi - \alpha_1 - \alpha_2$, which further implies
\begin{equation}
    \cos (\alpha_3) > \cos (\pi - \alpha_1 - \alpha_2) = - \cos (\alpha_1 + \alpha_2).
\end{equation}
It follows that
\begin{align}
    \cos(\alpha_1) + \cos(\alpha_2) + \cos(\alpha_3) &> \cos(\alpha_1) + \cos(\alpha_2) \label{eq:ProofD5} \\
    &- \cos(\alpha_1 + \alpha_2). \nonumber
\end{align}
We will now investigate closer the right-hand side of Eq.~(\ref{eq:ProofD5}), defining 
\begin{equation}
    f(\boldsymbol \alpha) = \cos(\alpha_1) + \cos(\alpha_2) - \cos(\alpha_1 + \alpha_2).
\end{equation}
In particular, we will show that $f(\boldsymbol{\alpha}) \geq 1$ holds true whenever \begin{subequations} \label{eq:ProofD8}
\begin{align}
    \alpha_i &\geq 0, \\
    \alpha_1 + \alpha_2 &\leq \pi.
\end{align}
\end{subequations}
For this, we evaluate the partial derivatives of $f$ with respect to $\alpha_i$:
\begin{align}
    \frac{\partial f}{\partial \alpha_1} = \sin (\alpha_1 + \alpha_2) - \sin (\alpha_1), \\
    \frac{\partial f}{\partial \alpha_2} = \sin (\alpha_1 + \alpha_2) - \sin (\alpha_2).
\end{align}
To find local extrema of $f$ we set $\partial f / \partial \alpha_i = 0$, which implies $\sin(\alpha_1) = \sin (\alpha_2)$. This means that $\alpha_1 = \alpha_2$, or $\alpha_1 = \pi - \alpha_2$. With the condition $\alpha_1 = \alpha_2$ we further obtain $\sin(2\alpha_2) = \sin(\alpha_2)$, with the solutions
\begin{subequations} \label{eq:ProofD6}
\begin{align}
    \alpha_1 = \alpha_2 = 0,\\
    \alpha_1 = \alpha_2 = \frac{\pi}{3}.
\end{align}
\end{subequations}
On the other hand, the condition $\alpha_1 = \pi - \alpha_2$ together with $\partial f / \partial \alpha_i = 0$ leads to $\sin(\alpha_1) = \sin(\alpha_2) = 0$, with the solutions 
\begin{subequations} \label{eq:ProofD7}
\begin{align}
    \alpha_1 &= 0, \alpha_2 = \pi, \\
    \alpha_1 &= \pi, \alpha_2 = 0.
\end{align}
\end{subequations}
For proving that $f(\boldsymbol{\alpha}) \geq 1$ we evaluate $f(\boldsymbol{\alpha})$ at the extrema (\ref{eq:ProofD6}) and (\ref{eq:ProofD7}), and also at the boundary of the region defined in Eqs.~(\ref{eq:ProofD8}). For the solutions~(\ref{eq:ProofD6}) we obtain $f(\boldsymbol{\alpha})=1$ and $f(\boldsymbol{\alpha})=3/2$, respectively. Moreover, the solutions (\ref{eq:ProofD7}) give $f(\boldsymbol \alpha) = 1$. 

It remains to show that $f(\boldsymbol\alpha) \geq 1$ also at the boundary of the region defined in Eqs.~(\ref{eq:ProofD8}). For a given value of $\alpha_1 \in [0,\pi]$, the boundary is attained for $\alpha_2 = 0$ or $\alpha_2 = \pi-\alpha_1$. As one can verify by inspection, $f(\boldsymbol{\alpha})=1$ in both cases. In summary, this proves that $f(\boldsymbol \alpha) \geq 1$ within the region (\ref{eq:ProofD8}). 

Collecting the above arguments, Eq.~(\ref{eq:ProofD4}) implies that there is a unitary $U=e^{-iHt}$ achieving the transformation $\{\ket{n}\}\rightarrow\{\ket{n_{+}}\}$ with $\sum_i \cos (E_i t) > 2$, in contradiction to Eq.~(\ref{eq:Constraint}). This completes the proof of the lower bound~(\ref{eq:ProofD4bound}).

As is explained in the main text, it is indeed possible to achieve the transformation $\{\ket{n}\}\rightarrow\{\ket{n_{+}}\}$ within time $t = \pi/(4 E)$. This completes the proof of the theorem.

\section{\label{sec:Permutation}Eigenvalues of permutation unitary}

In the following we will determine the eigenvalues of the permutation
unitary 
\begin{equation}
U\ket{n}=\ket{(n+1)\!\!\!\!\!\mod\!d}.\label{eq:Uperm}
\end{equation}
Let $\ket{\psi}=\sum_{n}a_{n}\ket{n}$ be an eigenstate of $U$, i.e.,
\begin{equation}
U\ket{\psi}=e^{i\alpha}\ket{\psi}.
\end{equation}
From Eq.~(\ref{eq:Uperm}) we obtain 
\begin{equation}
a_{n}=e^{i\alpha}a_{(n+1)\!\!\!\!\!\mod\!d},
\end{equation}
which implies that all coefficients $a_{n}$ must have the same absolute
value: $|a_{n}|^{2}=1/d$. Thus, any eigenstate $\ket{\psi}$ has
the form 
\begin{equation}
\ket{\psi}=\frac{1}{\sqrt{d}}\sum_{j=0}^{d-1}e^{i\phi_{j}}\ket{j}.\label{eq:PermuteEigenstate}
\end{equation}
From this it follows that $U$ cannot be degenerate. To prove this,
assume -- by contradiction -- that there exists two eigenstates
$\ket{\psi_{1}}$ and $\ket{\psi_{2}}$ with the same eigenvalue.
Then, any superposition of $\ket{\psi_{1}}$ and $\ket{\psi_{2}}$
is also an eigenstate of $U$. Moreover, by superposing $\ket{\psi_{1}}$
and $\ket{\psi_{2}}$ we can obtain an eigenstate which is not of
the form~(\ref{eq:PermuteEigenstate}), which is the desired contradiction.

In the next step note that any permutation unitary must fulfill 
\begin{equation}
U^{d}=\openone.
\end{equation}
Together with the fact that $U$ is non-degenerate, the eigenvalues
of $U$ must be of the form 
\[
\lambda_{n}=e^{i\frac{2\pi}{d}n},
\]
where $n$ is an integer in the range $0\leq n\leq d-1$.

\section{\label{sec:PureStates}Speed limits for pure states}

Let $H$ be a Hamiltonian of dimension $d$ with eigenvalues $E_{i}$
and eigenstates $\ket{E_{i}}$. Without loss of generality, we assume
that the eigenvalues are in increasing order, and thus $E_{\max}=E_{d-1}$
and $E_{\min}=E_{0}$.

Suppose now that an initial state $\ket{\psi}$ evolves for the time
$0\leq t\leq\pi/E_{\mathrm{gap}}$, where $E_{\mathrm{gap}}=E_{\max}-E_{\min}$
is the energy gap of the Hamiltonian. In the following, we are interested
in the minimal overlap between the initial state $\ket{\psi}$ and
the time-evolved state $\ket{\psi_{t}}=e^{-iHt}\ket{\psi}$: 
\begin{equation}
F_{\min}=\min_{\ket{\psi}}\left|\braket{\psi|e^{-iHt}|\psi}\right|,
\end{equation}
minimized over all initial states $\ket{\psi}$. 
\begin{prop}
\label{prop:Fmin}For a given Hamiltonian $H$ and evolution time
$0\leq t\leq\pi/E_{\mathrm{gap}}$ it holds that 
\begin{align}
F_{\min} & =\left|\braket{\psi_{\min}|e^{-iHt}|\psi_{\min}}\right|=\frac{1}{2}\left|e^{-iE_{\mathrm{gap}}t}+1\right|
\end{align}
with $\ket{\psi_{\min}}=\frac{1}{\sqrt{2}}\left(\ket{E_{0}}+\ket{E_{d-1}}\right)$. 
\end{prop}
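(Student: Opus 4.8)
The plan is to reduce Proposition~\ref{prop:Fmin} to an elementary statement about convex combinations of phases on the unit circle. Expand an arbitrary initial state in the energy eigenbasis, $\ket{\psi}=\sum_{i=0}^{d-1}c_i\ket{E_i}$, so that
\[
\braket{\psi|e^{-iHt}|\psi}=\sum_{i=0}^{d-1}p_i\,e^{-iE_it},\qquad p_i=|c_i|^2\ge 0,\quad \sum_i p_i=1 .
\]
Since replacing $H$ by $H+c\,\mathbb{1}$ (with $c=-(E_0+E_{d-1})/2$) only multiplies $e^{-iHt}$ by a global phase and leaves $|\braket{\psi|e^{-iHt}|\psi}|$ unchanged, I may assume without loss of generality that $E_0=-E_{\mathrm{gap}}/2$ and $E_{d-1}=+E_{\mathrm{gap}}/2$, hence $|E_i|\le E_{\mathrm{gap}}/2$ for all $i$.

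Next I would bound the real part. For $0\le t\le \pi/E_{\mathrm{gap}}$ one has $|E_it|\le E_{\mathrm{gap}}t/2\le\pi/2$, and since $\cos$ is nonnegative and monotonically decreasing on $[0,\pi/2]$ this gives $\cos(E_it)\ge\cos(E_{\mathrm{gap}}t/2)$ for every $i$. Therefore
\[
\bigl|\braket{\psi|e^{-iHt}|\psi}\bigr|\ \ge\ \Rea\sum_i p_i e^{-iE_it}\ =\ \sum_i p_i\cos(E_it)\ \ge\ \cos\!\left(\tfrac{E_{\mathrm{gap}}t}{2}\right),
\]
uniformly in $\ket{\psi}$, which already yields $F_{\min}\ge\cos(E_{\mathrm{gap}}t/2)$.

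It then remains to exhibit a state attaining this value. For $\ket{\psi_{\min}}=\tfrac{1}{\sqrt2}\bigl(\ket{E_0}+\ket{E_{d-1}}\bigr)$ a direct computation gives $\braket{\psi_{\min}|e^{-iHt}|\psi_{\min}}=\tfrac12\bigl(e^{-iE_0t}+e^{-iE_{d-1}t}\bigr)=\tfrac12\bigl(e^{iE_{\mathrm{gap}}t/2}+e^{-iE_{\mathrm{gap}}t/2}\bigr)=\cos(E_{\mathrm{gap}}t/2)$, so the lower bound is saturated. To match the form stated in the proposition, factor out a phase: $\tfrac12\bigl|e^{-iE_0t}+e^{-iE_{d-1}t}\bigr|=\tfrac12\bigl|1+e^{-iE_{\mathrm{gap}}t}\bigr|$, which equals $|\cos(E_{\mathrm{gap}}t/2)|=\cos(E_{\mathrm{gap}}t/2)$ in the admissible range of $t$, consistent with the claimed value of $F_{\min}$.

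I do not expect a genuine obstacle here: the heart of the argument is recognizing that $\braket{\psi|e^{-iHt}|\psi}$ is a convex combination of the phases $e^{-iE_it}$, so its modulus is smallest when the weight is concentrated on the two extreme eigenvalues, and that the hypothesis $t\le\pi/E_{\mathrm{gap}}$ is precisely what makes the cosine-monotonicity step valid (equivalently, it guarantees that all the phases $e^{-iE_it}$ lie on a common half-circle, so the point of their convex hull closest to the origin is the midpoint of the chord joining $e^{-iE_0t}$ and $e^{-iE_{d-1}t}$). The only points requiring minor care are the global-phase normalization and the identity $\tfrac12|1+e^{-i\theta}|=|\cos(\theta/2)|$ used to reconcile the two equivalent closed forms.
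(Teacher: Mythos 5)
Your proof is correct and follows essentially the same route as the paper: expand in the energy eigenbasis, recognize $\braket{\psi|e^{-iHt}|\psi}$ as a convex combination $\sum_i p_i e^{-iE_i t}$ of phases, and show the modulus is minimized by putting equal weight on the two extreme eigenvalues, with $\ket{\psi_{\min}}=\tfrac{1}{\sqrt{2}}(\ket{E_0}+\ket{E_{d-1}})$ attaining the value $\tfrac12\left|e^{-iE_{\mathrm{gap}}t}+1\right|$. The only difference is that where the paper calls the optimality of this weight distribution ``straightforward to see,'' you justify it explicitly (shift $H$ by a multiple of the identity so $|E_i|\le E_{\mathrm{gap}}/2$, use $|z|\ge\Rea z$, and apply cosine monotonicity on $[0,\pi/2]$, valid precisely because $t\le\pi/E_{\mathrm{gap}}$), which is a sound and slightly more complete rendering of the same argument.
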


\begin{proof}
Expanding the initial state in the eigenbasis of the Hamiltonian as
$\ket{\psi}=\sum_{j}c_{j}\ket{E_{j}}$ with complex coefficients $c_{j}$
allows us to write the overlap $|\!\braket{\psi|e^{-iHt}|\psi}\!|$
as follows: 
\begin{equation}
\left|\braket{\psi|e^{-iHt}|\psi}\right|=\left|\sum_{j}|c_{j}|^{2}e^{-iE_{j}t}\right|.
\end{equation}
Noting that the coefficients $c_{j}$ fulfill the condition $\sum_{j}|c_{j}|^{2}=1$,
our figure of merit can be expressed as 
\begin{equation}
F_{\min}=\min_{\ket{\psi}}\left|\braket{\psi|e^{-iHt}|\psi}\right|=\min_{\{p_{j}\}}\left|\sum_{j}p_{j}e^{-iE_{j}t}\right|,
\end{equation}
where the minimum on the right-hand side is taken over all probability
distributions $\{p_{j}\}$. Recalling that $E_{\mathrm{gap}}t\leq\pi$,
it is straightforward to see that the minimum is attained for the
following choice of $\{p_{j}\}$: 
\begin{equation}
p_{j}=\begin{cases}
\frac{1}{2} & \mathrm{for}\,\,\,j=0\,\,\mathrm{and}\,\,j=d-1,\\
0 & \mathrm{for}\,\,\,0<j<d-1.
\end{cases}
\end{equation}
It follows that the optimal state $\ket{\psi_{\min}}$, minimizing
the overlap $|\!\braket{\psi|e^{-iHt}|\psi}\!|$, can be chosen as
\begin{equation}
\ket{\psi_{\min}}=\frac{1}{\sqrt{2}}\left(\ket{E_{0}}+\ket{E_{d-1}}\right),
\end{equation}
as claimed. In the last step, it is straightforward to verify that
\begin{equation}
\left|\braket{\psi_{\min}|e^{-iHt}|\psi_{\min}}\right|=\frac{1}{2}\left|e^{-iE_{\mathrm{gap}}t}+1\right|
\end{equation}
which completes the proof of the proposition. 
\end{proof}
Remarkably, $F_{\min}$ does not depend on the structure of the Hamiltonian,
but only on the gap between the largest and the smallest eigenvalue
$E_{\mathrm{gap}}$. In the following, we will use this result to
bound the evolution time between pure states. 
\begin{prop}
\label{prop:Pure}The time for converting a pure states $\ket{\psi_{0}}$
into another state $\ket{\psi_{1}}$ via unitary evolution $U=e^{-iHt}$
is bounded as 
\begin{equation}
T(\ket{\psi_{0}}\rightarrow\ket{\psi_{1}})\geq\frac{1}{E_{\mathrm{gap}}}\arccos\left(2|\!\braket{\psi_{0}|\psi_{1}}\!|^{2}-1\right).\label{eq:tmin-3}
\end{equation}
\end{prop}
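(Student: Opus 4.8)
The plan is to reduce Proposition~\ref{prop:Pure} to Proposition~\ref{prop:Fmin} by rewriting the overlap $|\!\braket{\psi_0|\psi_1}\!|$ as an autocorrelation of $\ket{\psi_0}$ under the evolution. If a unitary $U=e^{-iHt}$ realizes the transition $\ket{\psi_0}\rightarrow\ket{\psi_1}$, then $e^{-iHt}\ket{\psi_0}=e^{i\gamma}\ket{\psi_1}$ for some global phase $\gamma$, so that
\begin{equation}
|\!\braket{\psi_0|\psi_1}\!| = |\!\braket{\psi_0|e^{-iHt}|\psi_0}\!|.
\end{equation}
Since $F_{\min}$ in Proposition~\ref{prop:Fmin} is a lower bound on this autocorrelation over \emph{all} input states, in particular $|\!\braket{\psi_0|\psi_1}\!|\geq F_{\min}$ whenever $t$ lies in the window $0\leq t\leq\pi/E_{\mathrm{gap}}$ for which Proposition~\ref{prop:Fmin} is stated.

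First I would treat the regime $0\leq E_{\mathrm{gap}}t\leq\pi$. There Proposition~\ref{prop:Fmin} gives
\begin{equation}
|\!\braket{\psi_0|\psi_1}\!| \geq F_{\min} = \frac{1}{2}\left|e^{-iE_{\mathrm{gap}}t}+1\right| = \cos\!\left(\frac{E_{\mathrm{gap}}t}{2}\right),
\end{equation}
where the last equality uses $|e^{-i\theta}+1|=2|\cos(\theta/2)|$ together with $\cos(E_{\mathrm{gap}}t/2)\geq0$ on the interval in question. Squaring and invoking the double-angle identity $2\cos^2 x-1=\cos 2x$ yields $2|\!\braket{\psi_0|\psi_1}\!|^2-1\geq\cos(E_{\mathrm{gap}}t)$. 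Applying $\arccos$, which is monotonically decreasing on $[-1,1]$, and using $\arccos(\cos\theta)=\theta$ for $\theta=E_{\mathrm{gap}}t\in[0,\pi]$, this rearranges to $\arccos(2|\!\braket{\psi_0|\psi_1}\!|^2-1)\leq E_{\mathrm{gap}}t$, which is precisely the bound~(\ref{eq:tmin-3}) after dividing by $E_{\mathrm{gap}}$.

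It then remains to dispose of the regime $E_{\mathrm{gap}}t>\pi$, but here the inequality is automatic: the argument of $\arccos$ in~(\ref{eq:tmin-3}) lies in $[-1,1]$, so the right-hand side never exceeds $\pi/E_{\mathrm{gap}}<t$. I do not anticipate any genuine obstacle in this argument; the only points that require care are the phase-invariance reduction of $|\!\braket{\psi_0|\psi_1}\!|$ to an autocorrelation, respecting the restriction $t\leq\pi/E_{\mathrm{gap}}$ under which Proposition~\ref{prop:Fmin} holds (handled by the case split), and keeping track of the reversed direction of the inequality when $\arccos$ is applied to both sides.
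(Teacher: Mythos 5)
Your proposal is correct and follows essentially the same route as the paper's own proof: it invokes Proposition~\ref{prop:Fmin} on the window $0\leq t\leq\pi/E_{\mathrm{gap}}$, rearranges $|\!\braket{\psi_0|\psi_1}\!|\geq\frac{1}{2}|e^{-iE_{\mathrm{gap}}t}+1|$ into Eq.~(\ref{eq:tmin-3}), and observes that for $t>\pi/E_{\mathrm{gap}}$ the bound holds trivially. Your disposal of the latter regime via $\arccos(x)\leq\pi$ is in fact slightly cleaner than the paper's remark, which only quotes $\arccos(x)\leq\pi/2$ for $x\geq0$ and thus tacitly omits the case of overlap squared below $1/2$.
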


\begin{proof}
If the states $\ket{\psi_{0}}$ and $\ket{\psi_{1}}$ fulfill $\ket{\psi_{1}}=e^{-iHt}\ket{\psi_{0}}$
with $0\leq t\leq\pi/E_{\mathrm{gap}}$, then by Proposition~\ref{prop:Fmin}
it follows that 
\begin{equation}
\left|\braket{\psi_{0}|\psi_{1}}\right|^{2}\geq\frac{1}{4}\left|e^{-iE_{\mathrm{gap}}t}+1\right|^{2}.
\end{equation}
This inequality is equivalent to 
\begin{equation}
t\geq\frac{1}{E_{\mathrm{gap}}}\arccos\left(2\left|\braket{\psi_{0}|\psi_{1}}\right|^{2}-1\right).
\end{equation}

On the other hand, if $\ket{\psi_{0}}$ and $\ket{\psi_{1}}$ fulfill $\ket{\psi_{1}}=e^{-iHt}\ket{\psi_{0}}$ with $t > \pi/E_{\mathrm{gap}}$, Eq.~(\ref{eq:tmin-3}) is automatically satisfied, since $\arccos(x) \leq \pi/2$ for $x \geq 0$. This completes the proof.

\end{proof}

Noting that $E_{\mathrm{gap}}\leq dE$, where $E=\mathrm{Tr}[H]/d - E_{0}$
is the average energy of the Hamiltonian, we immediately obtain the following lemma.
\begin{lem} \label{lem:PureStates}
The time for converting a pure state $\ket{\psi_0}$ into another state $\ket{\psi_1}$ via unitary evolution $U = e^{-iHt}$ is bounded below as
\begin{equation}
T(\ket{\psi_{0}}\rightarrow\ket{\psi_{1}})\geq\frac{1}{dE}\arccos\left(2|\!\braket{\psi_{0}|\psi_{1}}\!|^{2}-1\right).\label{eq:BoundPure}
\end{equation}
\end{lem}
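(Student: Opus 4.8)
The plan is to obtain Lemma~\ref{lem:PureStates} as an immediate corollary of Proposition~\ref{prop:Pure} (which itself rests on Proposition~\ref{prop:Fmin}) together with the elementary spectral inequality $E_{\mathrm{gap}} \leq dE$. So there is no new analytic content to produce here; the work is just the passage from the energy gap to the mean energy.

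First I would establish $E_{\mathrm{gap}} \leq dE$. Writing the eigenvalues of $H$ in increasing order, one has $E_{\mathrm{gap}} = E_{\max} - E_{\min} = E_{d-1} - E_0$, while by the definition of the mean energy $E = \mathrm{Tr}[H]/d - E_0$ one has $dE = \sum_{j=0}^{d-1}(E_j - E_0)$. Each summand $E_j - E_0$ is nonnegative, and the single term with $j = d-1$ already equals $E_{\mathrm{gap}}$; hence $dE \geq E_{\mathrm{gap}}$.

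Next I would invoke Proposition~\ref{prop:Pure}, which gives $T(\ket{\psi_0}\rightarrow\ket{\psi_1}) \geq \tfrac{1}{E_{\mathrm{gap}}}\arccos\!\big(2|\!\braket{\psi_0|\psi_1}\!|^2 - 1\big)$. Since $|\!\braket{\psi_0|\psi_1}\!|^2 \in [0,1]$, the argument of the $\arccos$ lies in $[-1,1]$, so the value $\arccos\!\big(2|\!\braket{\psi_0|\psi_1}\!|^2 - 1\big) \in [0,\pi]$ is nonnegative. Therefore replacing the coefficient $1/E_{\mathrm{gap}}$ by the smaller number $1/(dE)$ can only weaken the right-hand side, and we arrive at Eq.~(\ref{eq:BoundPure}).

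I expect essentially no obstacle: the only point requiring a moment of care is the sign of the $\arccos$ term, which is exactly why it matters that $|\!\braket{\psi_0|\psi_1}\!|^2$ is a genuine overlap in $[0,1]$ so that $2|\!\braket{\psi_0|\psi_1}\!|^2 - 1 \in [-1,1]$; if that failed, monotonicity in the denominator could not be used. Everything else is the content already carried by Propositions~\ref{prop:Fmin} and~\ref{prop:Pure}.
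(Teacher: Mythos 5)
Your proposal is correct and follows exactly the paper's route: the paper also obtains the lemma directly from Proposition~\ref{prop:Pure} together with the observation $E_{\mathrm{gap}}\leq dE$ (you simply spell out the elementary inequality $dE=\sum_{j}(E_{j}-E_{0})\geq E_{d-1}-E_{0}$ and the nonnegativity of the $\arccos$ term, which the paper leaves implicit). No gaps.
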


Moreover, for any two pure states $\ket{\psi_{0}}$ and $\ket{\psi_{1}}$
there exists a Hamiltonian $H$ saturating Eq.~(\ref{eq:BoundPure}).
To see this, recall that Eq.~(\ref{eq:BoundPure}) is tight for $d=2$, see also Eq.~(\ref{eq:pure}). Let now $H = \ket{\phi}\!\bra{\phi}$ be a Hamiltonian which saturates the inequality for $d=2$. Note that the mean energy in this case is given by $E=1/2$. This implies that the Hamiltonian achieves the transformation $\ket{\psi_0} \rightarrow \ket{\psi_1}$ within the time 
\begin{equation}
    t = \arccos\left(2|\!\braket{\psi_{0}|\psi_{1}}\!|^{2}-1\right), \label{eq:PureShortestTime}
\end{equation}
which is the shortest possible time for $E=1/2$. For $d > 2$ we can use the same Hamiltonian $H = \ket{\phi}\!\bra{\phi}$ to achieve the transformation within the same time as given in Eq.~(\ref{eq:PureShortestTime}). The mean energy is now given by $E = 1/d$, and we see that Eq.~(\ref{eq:BoundPure}) is saturated.

\section{Proof of Theorem \ref{thm:LowerBoundGeneral}} \label{sec:ProofLowerBound}

We define $T_{low} = \frac{d-1}{dE}\frac{\pi}{4}$ and $d\geq 2$. Let us assume that $T\leq T_{low}$. Then there must exist a Hamiltonian such that:
\begin{equation}
    ET\leq \frac{d-1}{d}\frac{\pi}{4}.
\end{equation}
Without loss of generality, we consider $E_{0}=0$ and $E_{j}\geq 0$ for all $j$. Also we define $\alpha_{j}=E_{j}T$, therefore we have:
\begin{equation}
    \sum_{j}\alpha_{j}\leq (d-1)\frac{\pi}{4}.
    \label{Region}
\end{equation}

By Eq. (\ref{eq:Constraint}) we must have $-\sqrt{d} \leq \sum_{j}\cos{\alpha_{j}}\leq \sqrt{d}$. Minimizing the function $f(\boldsymbol{\alpha})=\sum_{j}\cos{\alpha_{j}}$, we show that  $f(\boldsymbol{\alpha})$ is always greater than $\sqrt{d}$ in the region (\ref{Region}), hence $T$ cannot be smaller than $T_{low}$. First, we find the critical points of the function $f(\boldsymbol{\alpha})$ inside the region (not on the boundary). Taking the first derivatives of the function in $\alpha_{i}$, we obtain the following  equations:
\begin{equation}
    \sin{\alpha_{i}}=0 \,\,\, \forall i
\end{equation}
This shows that $\alpha_{i}=K_{i}\pi$ and $K_{i}\geq 0$. For these values, $\cos{\alpha_{i}}$ is either $1$ or $-1$, thus the minimum of the function (among these critical points) occurs when we have maximum number of $-1$ which with respect to the constraint (\ref{Region}), $\lfloor \frac{d-1}{4} \rfloor$ number of $\alpha_{i}$ must be equal to $\pi$ and the others be zero. Therefore the minimum is $d-2\lfloor\frac{d-1}{4}\rfloor$ if $\frac{d-1}{4}$ is not an integer. In the case $\frac{d-1}{4}$ is an integer, the point will be on the boundary of the region which we will consider it in the following.

Now, we find the critical points on the boundary of the region (\ref{Region}) where we have $\sum_{j}\alpha_{j}=(d-1)\frac{\pi}{4}$ and $\alpha_{j}\geq 0$. Generally, we assume that we are on the part of the boundary where $x$ number of the $\{\alpha_{i}\}_{i=1}^{d-1}$ are zero. Applying the Lagrange multipliers method, we end up with the equations below:
\begin{equation}
    \sin{\alpha_{i}}=k \,\,\, \forall i,
    \label{Eq-Boundary}
\end{equation}
where $k$ is the Lagrange multiplier. Eqs. (\ref{Eq-Boundary}) show that either $\alpha_{i}=\lambda+2K_{i}\pi$ or $\alpha_{i}=\pi-\lambda +2K_{i}^{'}\pi$ in which $0\leq\lambda\leq \frac{\pi}{2}$ and $K_{i},K_{i}^{'}$ are non-negative integers (because $\alpha_{i}\geq 0$). Being on the part of the boundary with $x$ number of $\alpha_{i }$ to be zero and assuming that $N$ number of them are of the form $\alpha_{i}=\pi-\lambda +2K_{i}^{'}\pi$, we must have (by $\sum_{j}\alpha_{j}=(d-1)\frac{\pi}{4}$):
\begin{equation}
    (d-x-2N)\lambda+(N+\sum_{j} K'_{j}+\sum_{l} K_{l})\pi=(d-1)\frac{\pi}{4}.
\end{equation}
We define $K\equiv \sum_{j} K'_{j}+\sum_{l} K_{l}$. If we write $\lambda$ in terms of $K$ and $N$ we obtain:
\begin{equation}
    \lambda=\frac{(d-1)/2-2(N+K)}{d-x-2N}\frac{\pi}{2}
\end{equation}
and the function takes the form $x+(d-x-2N) \cos{\lambda}$. If we are in the domain $N<\frac{d-x}{2}$ then the function takes its minimum when $\lambda$ is largest and it occurs for $K=0$ (for any $x$ and $N$) . If we are in the domain $N > \frac{d-x}{2}$ then we have:
\begin{equation}
     \lambda=\frac{N-(d-1)/4}{N-(d-x)/2}\frac{\pi}{2}+\frac{K}{N-(d-x)/2}\frac{\pi}{2}.
     \label{Lambda-2}
\end{equation}
Since $x\leq \sqrt{d}$ (otherwise $\sum_{j}\cos{\alpha_{j}}>\sqrt{d}$ and the proof would be done), we can easily show that the first term in Eq. (\ref{Lambda-2}) is greater than $\pi/2$ as the coefficient $\frac{N-(d-1)/4}{N-(d-x)/2}$ is greater than $1$:
\begin{equation}
    N-\frac{d-1}{4}\geq N-\frac{d-\sqrt{d}}{2}\geq N-\frac{d-x}{2}\Longleftrightarrow (\sqrt{d}-1)^2\geq 0.
    \label{Domain-N}
\end{equation}
Also, The second term in \ref{Lambda-2} is positive. Thus, in the domain $N >\frac{d-x}{2}$, $\lambda$ is greater than $\frac{\pi}{2}$ which is a contradiction to the initial assumption $\lambda \leq \frac{\pi}{2}$. Furthermore in the case $N=\frac{d-x}{2}$, from the Eq. \ref{Lambda-2}, we get $d+1+2K=2x$ which is a contradiction as $x$ is a positive integer and $x\leq \sqrt{d}$. Therefore, $N< \frac{d-x}{2}$ and  $\lambda$ takes the following form for the minimum of the function:
\begin{equation}
    \lambda=\frac{(d-1)/4-N}{(d-x)/2-N}\frac{\pi}{2}.
\end{equation}
Moreover, from the Eq. \ref{Domain-N}, we know that $\frac{d-1}{4}\leq \frac{d-x}{2}$ so we must have $0\leq N\leq \frac{d-1}{4}$ because $\lambda\geq 0$. Now, we should see which value of $N$ in the domain minimizes the function. We should obtain the minimum of the function below while $N$ varies:
\begin{equation}
    x+(d-x-2N)\cos{(\frac{(d-1)/4-N}{(d-x)/2-N}\frac{\pi}{2})}.
\end{equation}
By taking the first derivative of this function in $N$ we can easily see that it is monotonically decreasing in the valid domain of $N$, hence the value $N_0=\lfloor(d-1)/4\rfloor$ achieves the minimum of $f(\boldsymbol{\alpha})$ with the value of $ x+(d-x-2\lfloor(d-1)/4\rfloor)\cos{(\frac{(d-1)/4-\lfloor(d-1)/4\rfloor}{(d-x)/2-\lfloor(d-1)/4\rfloor}\frac{\pi}{2})}$ which is always greater than $\sqrt{d}$ for $ d\geq2$:

\begin{multline}
\sqrt{d} \leq x(1-\cos{(\frac{(d-1)/4-N_0}{(d-x)/2-N_0}\frac{\pi}{2})})\\
+ \frac{d+1}{2}\cos{(\frac{1}{\frac{d+1}{2}-\frac{\sqrt{d}}{2}}\frac{\pi}{2})}
\\
\leq 
x+(d-x-2\lfloor(d-1)/4\rfloor)\cos{(\frac{(d-1)/4-\lfloor(d-1)/4\rfloor}{(d-x)/2-\lfloor(d-1)/4\rfloor}\frac{\pi}{2})}
\end{multline}

where for obtaining the second inequality we used the facts that $ 1\leq x\leq \sqrt{d}$ and $\frac{d-1}{4}-1\leq\lfloor \frac{d-1}{4}\rfloor \leq \frac{d-1}{4}$. Thus, the minimum of the function $f(\boldsymbol{\alpha})$ in the region (\ref{Region}) is always greater than $\sqrt{d}$ which is a contradiction to Eq. (\ref{eq:Constraint}), and the proof is complete.

We will now present a lower bound for the speed limit in the Hilbert space of the dimension $d=6$. We will show that the minimal time for transformation of the basis $\{\ket{i}\}_{i=0}^{5}$ to an unbiased basis via a Hamiltonian with fixed mean energy $E$ is bounded below by 
\begin{equation}
    \frac{1}{3E}\arccos{\left(-\frac{\sqrt{6}-4}{2}\right)} \leq T.
\end{equation}

To prove the lower bound, let assume there exist a Hamiltonian for which 
\begin{equation}
T<\frac{1}{3E}\arccos{(-\frac{\sqrt{6}-4}{2})},
\end{equation}
thus we must have $\sum_{i=0}^{5}E_{i}T<2\arccos{(-\frac{\sqrt{6}-4}{2})}$. We define $E_{i}T=\alpha_{i}$ and without loss of generality we consider the minimum eigenenergy of the Hamiltonian $E_{min}=E_{0}=0$. By Eq. (\ref{eq:Constraint}) we must have $-\sqrt{6} \leq \sum_{j}\cos{\alpha_{j}}\leq \sqrt{6}$. We show that the function $f(\boldsymbol{\alpha})=\sum_{j}\cos{\alpha_{j}}$ is always greater than $\sqrt{d}$ in the region
\begin{equation}
    R=\{\sum_{i=0}^{5}\alpha_{i}<2\arccos{(-\frac{\sqrt{6}-4}{2})} \wedge \alpha_{i}>0\textrm{  }\forall i\}.
    \label{Region-d=6}
\end{equation}
Hence, $T$ cannot be smaller than $\frac{1}{3E}\arccos{(-\frac{\sqrt{6}-4}{2})}$.

We minimize  the function $f(\boldsymbol{\alpha})=\sum_{i=0}^{5}\cos{\alpha_{i}}$ in the region closure of $R$. First, we find all the critical points inside the region. By taking the derivatives of the function $f(\boldsymbol{\alpha})$ and equating them to zero, we obtain the critical points as $\alpha_{i}=K_{i}\pi$, $K_{i}\geq 0$ and $K_{i}$ are integers. As $\cos{(K_{i}\pi)}=\pm 1$, the minimum of the function among these critical points occurs when we have the maximum number of $-1$ (with respect to our region $R$, we are allowed to have only one $-1$). Thus the minimum among these critical points is $4$. Now, we find the minimum on the boundaries $\sum_{i=0}^{5}\alpha_{i}=2\arccos{(-\frac{\sqrt{6}-4}{2})}$. Let us assume (without loss of generality) that we are on the part of these boundaries such that $x$ number of $\alpha_{i}$ are zero. Note that $0 \leq x \leq 3$ otherwise the function $f(\boldsymbol{\alpha})$ is greater than $\sqrt{6}$ and we are done with the proof according to Eq.~(\ref{eq:Constraint}). Applying Lagrange multiplier method, we obtain the following set of equations:
\begin{equation}
    \sin{\alpha_{i}}=k, \textrm{  }\forall k
\end{equation}
where $k$ is the multiplier. From these equations we find that $\alpha_{i}$ must be of the following form:
\begin{equation}
    \alpha_{i}=
    \begin{cases}
    \lambda+2K_{i}\pi \,\,\,\,\mathrm{or}\\
    \pi-\lambda + 2K^{\prime}_{i}\pi,
    \end{cases}
    \label{Alpha-Form-d=6}
\end{equation}
in which $0\leq\lambda \leq \frac{\pi}{2}$ and $K_{i}$ and $K^{\prime}_{i}$ are non-negative integers (they must be non-negative as $\alpha_{i}$ are non-negative). We further assume (without loss of generality) that $N$ number of $\alpha_{i}$ are in the second form of Eq. (\ref{Alpha-Form-d=6}). By the constraint on the border of the closure of $R$, we have:
\begin{equation}
    (6-x-N)\lambda + N(\pi-\lambda) +2(\sum_{i}K_{i}+\sum_{j}K^{\prime}_{j})=2\arccos{-\frac{\sqrt{6}-4}{2}}.
\end{equation}
Solving this equation for $\lambda$ we obtain:
\begin{equation}
    \lambda=\frac{2\arccos{(-\frac{\sqrt{6}-4}{2})}-(2K+N)\pi}{6-x-2N}.
    \label{Angle-Lagr-Multiplier}
\end{equation}
where $K=\sum_{i}K_{i}+\sum_{j}K^{\prime}_{j}$. Eq. (\ref{Angle-Lagr-Multiplier}) implies that $N<\frac{6-x}{2}$ otherwise $\lambda>\pi/2$ which is a contradiction (to the initial assumption that $0\leq\lambda \leq \frac{\pi}{2}$). The function $f(\boldsymbol{\alpha})$ for the critical points on the boundary becomes $(6-x-2N)\cos{(\frac{2\arccos{(-\frac{\sqrt{6}-4}{2})}-(2K+N)\pi}{6-x-2N})}$. Considering that $N<\frac{6-x}{2}$ and $0\leq \lambda \leq \frac{\pi}{2}$, it takes its minimum for any $x$ and $N$ when $K=0$. Thus the minimum of the function on the boundary must be of the form $(6-x-2N)\cos{(\frac{2\arccos{(-\frac{\sqrt{6}-4}{2})}-N\pi}{6-x-2N})}$ which is greater than or equal $\sqrt{6}$ for any $1\leq x \leq 3$ and $N<\frac{6-x}{2}$. Therefore, the minimum of the function $f(\boldsymbol{\alpha})$ over the region $R$ is greater than $\sqrt{6}$ which is a contradiction to Eq.~(\ref{eq:Constraint}) and the proof is complete.

\end{document}